\tikzstyle{none}=[inner sep=0mm]
\tikzset{new style 0/.style={circle,draw}}
\newtheorem{lemma}{Lemma}
\newtheorem{proposition}{Proposition}
\newtheorem{definition}{Definition}
\newenvironment{proof}{\paragraph{Proof:}}{\hfill$\square$}
\title{A Model of Competitive Assortment Planning Algorithm}
\author{Dipankar Das\footnote{Assistant Professor, Goa Institute of Management, Email ID:dipankar3das@gmail.com;dipankar@gim.ac.in}}
\begin{document}
	\maketitle	
	\begin{abstract}
With a novel search algorithm or assortment planning or assortment optimization algorithm that takes into account a Bayesian approach to information updating and two-stage assortment optimization techniques, the current research provides a novel concept of competitiveness in the digital marketplace. Via the search algorithm, there is competition between the platform, vendors, and private brands of the platform. The current paper suggests a model and discusses how competition and collusion arise in the digital marketplace through assortment planning or assortment optimization algorithm. Furthermore, it suggests a model of an assortment algorithm free from collusion between the platform and the large vendors. The paper's major conclusions are that collusive assortment may raise a product's purchase likelihood but fail to maximize expected revenue.  The proposed assortment planning, on the other hand, maintains competitiveness while maximizing expected revenue.
\end{abstract}
\paragraph{Keywords}
Search Algorithm, Assortment Planning, Assortment Optimization, E-commerce, Bayesian Statistics, Information Theory.
\section{Introduction}
Assortment personalization is the problem of selecting the best assortment of products for each customer to maximize revenue. This is a critical issue in e-commerce revenue management. Customers select which products to purchase from the assortment presented by the Platform using the customized search algorithm. Platforms decide which assortment to present to the customer and collect revenue based on each customer's selection.\\
A central problem in revenue management is selecting the assortment that maximizes expected revenue. This is known as assortment planning or assortment optimization. \\
There is a direct relationship between online sequential decision-making and reviews, assortment planning, and the revenue management of the platform. Significant research found in this regard in a set of recent articles. These are \cite{Besbes_2009, doi:10.1287/opre.2016.1534, doi:10.1287/isre.2019.0852, doi:10.1287/mnsc.2019.3372, doi:10.1287/mnsc.2022.4387, doi:10.1287/mnsc.2021.4044, Chen_2021, doi:10.1287/opre.2022.2380, Vaccari_2018}. The main research gaps are as follows. The revenue maximization strategy in the assortment planning problem has not been considered the competition in the platform. Rather focuses on the self-revenue maximization problem and promotes its brands despite having an insignificant number of reviewers\footnote{Market study on e-commerce in India(08-01-2020):\url{https://www.cci.gov.in/node/4637}}. The consumer has to consider the reviewers as rational in the decision-making under the Bayesian setup.  Hence, the present article tries to reduce the gap and proposes a formal model of assortment planning considering the competition between the vendors and platform using-adjusted average number of reviews in the customers' decision-making and the product ranking.\\ 
 A central problem in revenue management is selecting the assortment that maximizes expected revenue. This is known as assortment planning or assortment optimization. 
The current article identifies that the existing search algorithms used in the platform's information retrieval are not based on standard competitive models that meet all of the criteria. Problems persist in digital markets, such as unequal market access on the part of small vendors on the platform, the creation of information asymmetry in terms of the availability of relatively competitive vendors to customers, the dominance of large vendors in gaining extra market access in the digital market, and so on. As a result, the study presents a standard assortment planning or assortment optimization technique that avoids collusion between the platform and large vendors, provides equal access, decreases information asymmetry, and maximizes expected revenue.\\ 
The online platforms serve as both a marketplace and a competitor in that marketplace, and they have the incentive to use their platform control to benefit their own/preferred vendors or private label products at the expense of other sellers/service providers on the platform. The platform's intermediary role allows it to gather all such competitively relevant data as price, sold quantities, demand, and so on for each product, seller, and geography. On the customer side, this allows the platform to better target product recommendations for users and improve platform quality. On the seller's side, this may allow it to use such data to introduce its private label or boost its or its 'preferred sellers' sales. Organic search ranking is generated by the platform's search algorithm, and thus the platform controls the search parameters and results. The platform's dual role raises the possibility of ranking biases created by the platform as a discriminatory device. The search algorithm's 'Black Box' nature limited customers' ability to identify biases, limiting the possibility of self-correction. The platforms identified user review/rating as a critical input in determining search ranking. Lack of transparency and credibility issues surrounding some of these platforms' user review and rating policies were highlighted by business users as a factor that further allowed for search result manipulation, affecting their ability to compete effectively with vertically integrated entities or the platforms' preferred entities. It is not in their best interests to prioritize a few at the expense of platform quality, especially given the intensity of competition between online platforms and the still predominantly offline retail landscape. The issue of platform intensity and lack of countervailing power is dependent on the large platforms, as evidenced by their inability to go offline individually even when they find the contract terms imposed by the platforms to be abusive and unacceptable. The above argument is supported by a significant empirical study conducted by the Competition Commission of India \footnote{Market study on e-commerce in India(08-01-2020):\url{https://www.cci.gov.in/node/4637}}. Concerns were raised about the lack of clarity on search ranking criteria across all three categories of work. The main issue is that when the search query matches, local brands rank lower than the platform's private brands. The current paper presents a method of collusion free and competitive assortment algorithms by employing a base model of information acquisition via sequential online review data that maximizes revenue. It shows that the collusive assortment algorithm may increases the purchase probability of a particular product but fails to maximize revenue. The primary distinction between search engines such as Google, Yahoo, and others and the search engine used by Amazon, Flipkart, and others is that the Google-type search engine allocates slots based on auction bidding, with each bidder paying for the allotted slots. A set of related literature can be found in \cite{varian2007position,varian2009online,edelman2007internet,haucap2014google,banchio2022adaptive}.
On the other hand, a search engine of the Amazon type (or e-commerce platform) allocates slots using both with and without requiring any payment for the allotted slot for all vendors.  Initial slots are allocated using an auction mechanism but the rest slots are allocated using a score and the score is based on the individual firms’ performances. The present article proposes a model for the slots that are being allocated using a scoring mechanism. Individual firms that are listed have a built-in score. If the search query matches a set of listed firms on a given platform, the firms' scores will be permitted. Platform, on the other hand, adheres to a joint profit maximization philosophy. The platform's profit-maximizing slot allocation is in place. If these two preferences match, there will be no problem and the matching will be stable; however, if the platform's allocation does not match the allocation of the firm's score, there will be a mismatch and the matching will be unstable. This mismatch is regarded as collusion in the sellers' ranking. As a result, in addition to their score, the platform will have a mechanism for transferring utility to firms that did not get the desired slots. Choice can be influenced by the position of alternatives, as cited in \cite{brook1974biases}. There is a substantial body of literature on the rank order of alternatives. \cite{critchlow1991probability,georgescu1958threshold,georgescu1969relation,luce1977choice,plackett1968random,plackett1975analysis} are some examples.  The current article proposes a theoretical model for it by describing the problem of competition and collusion through the design of an assortment mechanism. Moreover, the present article proposes a model that demonstrates how a collusive assortment algorithm generates a collusive ranking and biases in choice.
\subsection{Related Literature}
Nowadays, digital platforms serve two functions: they run marketplaces for third-party products while also selling their products on those marketplaces. The company that owns and manages the marketplace system will also sell competing products through it. This results in a new type of channel conflict, \cite{ryan2012competition}. According to a recent study, a ban often benefits third-party sellers at the expense of customer surplus or overall welfare. The main reason for this is that the presence of the platform's products constrains the pricing of third-party sellers on its marketplace in dual mode, which benefits customers  \cite{hagiu2022should}. \cite{mills1995retailers,gamp2022competition,tian2018marketplace} for a collection of related literature on third-party retailer models.
In the United States, Amazon is used by far more customers than Google for product searches\footnote{Statista, July 7, 2021}. The ranking of the sellers is determined by three distinct sets of agents: the customer's query, the sellers' performance parameters, and the platform's profit-maximizing objective. Technically, the primary goal of product ranking is to match customers' preferences with existing products that match the match.  \textit{Ranking in Information Retrieval }  is a method of ranking the objects related to the customer's query. \cite{liu2009learning} contains a collection of significant research.
A relevance ranking model attempts to generate a ranked list of documents based on their relevance to the query. These information retrieval methods attempt to minimize the differences between the estimated and actual query outputs. Another approach to sequential prediction problems is \textit{Prediction, Learning, and Games}, and significant literature can be found in \cite{cesa2006prediction}. There is evidence of control over the search engine platform and the creation of a market flaw. This means that the platform can also manipulate the product ranking. According to \cite{argenton2012search}, there is a strong tendency toward market tipping and, as a result, monopolization, with negative consequences for economic welfare.
The leader has the incentive to exploit them and attract more customers to monopolize advertising. In terms of entry barriers, the leading platform has a strategic incentive to exploit scale in search, manipulate search results to divert search traffic away from competitors, limit multi-homing, expand its market share, and deny scale to competitors. \ cite{etro2013advertising}. In \cite{evans2013antitrust,dolata2017apple,chiou2017search}, a survey of the economics literature on multi-sided platforms was conducted, with a particular focus on competition policy issues such as market definition, mergers, monopolization, and coordinated behavior.
The \cite{sorokina2016amazon,karmaker2017application} surveyed e-commerce ranking methods. However, there is no research on the formation of collusion with the seller.\\
The platform's dual-Format model is critical in forming the collusion. Amazon, the world's largest online retailer, is the most well-known example of dual-format retailing. Amazon Marketplace, in addition to performing the traditional "merchant" function (buying and reselling goods), provides a platform for 3P sellers to compete for the same customers. A customer shopping for a product (for example, a camera) has the option of purchasing it directly from Amazon or through other Amazon Marketplace sellers in many product categories (as a new or used product). A retailer's strategic rationale for introducing a 3P marketplace is that it provides an "outside option" that improves its bargaining position in negotiations with the manufacturer \cite{mantin2014strategic}.  This can explain the growing popularity of such marketplaces, which provide a platform for directly connecting sellers and buyers. The intermediary acts as an agent in the online marketplace mode, charging a proportional fee for each sale. A fee of this type could be viewed as a revenue-sharing mechanism that could reduce double marginalization\cite{tian2018marketplace}.\\
Several recent studies have been published to address the question of whether platforms should be allowed to sell on their marketplaces. The relevant literature can be found in \cite{hagiu2022should};
\cite{gamp2022competition}; \cite{sato2022joint}. The following are the key findings. Even when focusing on the same narrow product category, the dual mode has advantages. A prohibition on this dual mode frequently favors third-party sellers at the expense of customer surplus or overall welfare. The main reason for this is that the presence of the platform's products constrains the pricing of third-party sellers on its marketplace in dual mode, which benefits customers. In the dual model and joint purchase mode, on the other hand, the platform can profitably increase the commission by increasing cross-market complementarity. Firms may design low-quality products to market them to gullible customers who misjudge product characteristics. As a result, superior and inferior quality coexist, and as search frictions disappear, the share of superior goods approaches zero.\\
From the preferences of the customers, a set of interesting literature has been found. A fundamental problem in click data is position bias. The probability of a document being clicked depends not only on its relevance but on its position on the results page \cite{craswell2008experimental, kveton2015cascading,zong2016cascading}. \\
Assortment personalization is the problem of choosing, for each individual (type), the best assortment of products, ads, or other offerings (items) to maximize revenue. This problem is central to revenue management in e-commerce and online advertising where both items and types can number in the millions. A set of related literature is found in \cite{kallus2020dynamic,agrawal2019mnl,abeliuk2016assortment,ferreira2022learning,gao2021assortment,liu2020assortment,saure2013optimal}\\\cite{wang2018impact}.  There are important articles on assortment planning and optimization model viz. \cite{chen2021revenue,doi:10.1287/mnsc.2022.4387,vaccari2018social}. These three articles have considered three different approaches one considered a static problem, another consider a dynamic approach but did not consider the competitiveness in the dynamic phenomena.\\
There is an important relationship between product ranking and revenue management. The existing literature considers either maximizing revenue by preferring the probability of satisfaction with any product. The paper \cite{doi:10.1287/opre.2016.1534} proposes Rank Centrality, an iterative rank aggregation algorithm for discovering scores for objects (or items) from pairwise comparisons. The article \cite{derakhshan2022product}  proposes a two-stage sequential search model wherein the first stage, the customer sequentially screens positions to observe the preference weight of the products placed in them and forms a consideration set. In the second stage, she observes the additional idiosyncratic utility that she can derive from each product and chooses the highest-utility product within her consideration set. A revenue maximization problem for an online retailer who plans to display in order a set of products differing in their prices and qualities in \cite{chen2021revenue}. The product ranking mechanisms of a monopolistic online platform in the presence of social learning has been considered in \cite{vaccari2018social}. Other related research works are found in \cite{agarwal2011location,besbes2012blind,talluri2004revenue}. The present work considered the justified competitive product ranking and revenue maximization strategy under the dynamic phenomena. \\
The main draw of this article is how the platform's colluding behavior affects the allocation of slots or the ranking of the objects, purchase probability, and revenue. Under collusion, the platform's main goal is to change the order of the preferred shortlisted objects in \textit{Add to Cart}. As a result, competition will be reduced, and buyers will be directed toward predefined objects/products to prefer. As a result, this is a mechanism design problem in which the platform will first set the result and then the arrangements will be in place to achieve the predefined results.
\section{Statement of the Problem and Summary of Findings}
This section covers the suggested model and findings of the two-stage assortment method. Figures \ref{figure3} and \ref{figure4}  illustrate two examples of search quarries. The first four ranks are based on a payment basis, or an auction system, as shown in Fig. \ref{figure4}. This is labeled as sponsored. However, from rank five onwards, the list is based on a scoring system.  It is clear that the product/firm/vendor number eight is not appropriate for the $ 8^{th} $ rank position. It is evident that with the same review ratings, the number of reviewers is $ 95 $, and the price is likewise lower. As a result, the article demonstrates that the purchase probability of the $ 9th $ ranked vendor's products will increase, but not the expected revenue. Using Fig. \ref{figure4} we have prepared two tables Table \ref{tab:title} \& Table \ref{tab:title continue} to select the first three-ranked objects using the proposed two stage assortment algorithm. The First Stage Ranking has been done using the minimum cut-off review numbers and quality as the parameter.  The first stage cut-off has been calculated by dividing the summation of column (6) by the summation of column (3). Thereafter, the objects will be selected that have these minimum cut-off review numbers and arranging them concerning average review ratings from highest to lowest. Thereafter in Stage Two, the cut-off has been revised concerning price. In stage two, the cut-off review numbers will be calculated by dividing the summation of column (4) by the summation of column (4) on the shorted objects in stage one. Thereafter, the objects will be arranged in ascending order concerning price from highest to lowest. And finally, we got the first ranked object as A. We may have got more than one object for the subsequent positions. As we have got only one object here so, the next places will be calculated using the same way by eliminating object A. The second-ranked object will be B. And the third object will be F.  Here it has been assumed that the customer has a fixed attention span i.e. 3. Hence, we have compared the expected revenue by comparing with another collusive ranking i.e. A-D-F. The article has proved in Proposition 1,2,\&3 that the purchase probability for product F in the second list will be more than the actual one. i.e. A-B-F. But this will not guarantee that the expected revenue will be maximum. Because the price of product D may be lower than product B because the demand is lower for product D compared to B. This has been proved in the article and especially in Proposition 1, 2, and 3. 
 \begin{figure}[H]
	\centering\includegraphics[scale=0.40]{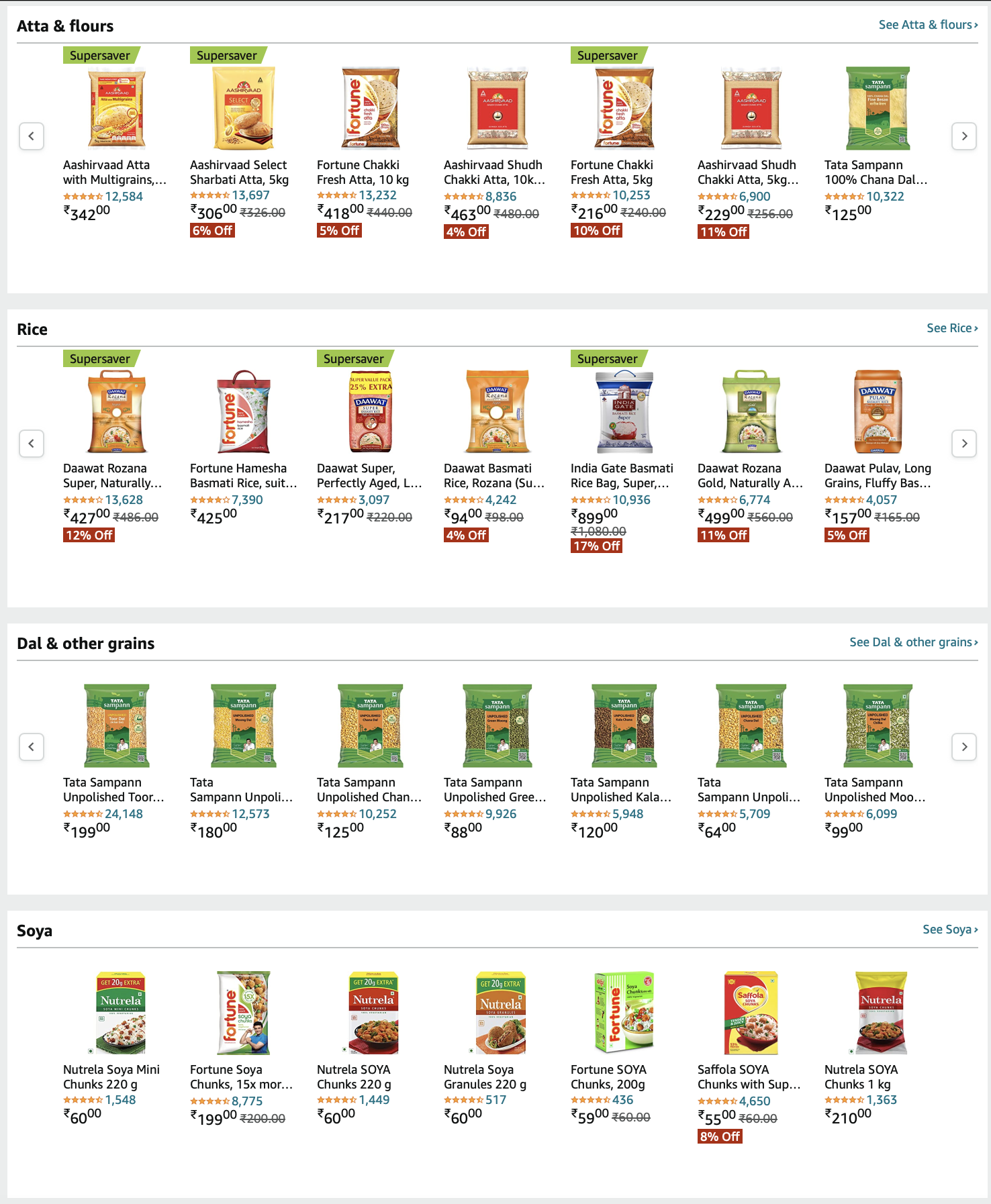}\caption{Product Rank of Food Items-Example from amazon.in}\label{figure3}
\end{figure} 
 \begin{figure}[H]
	\centering\includegraphics[scale=0.40]{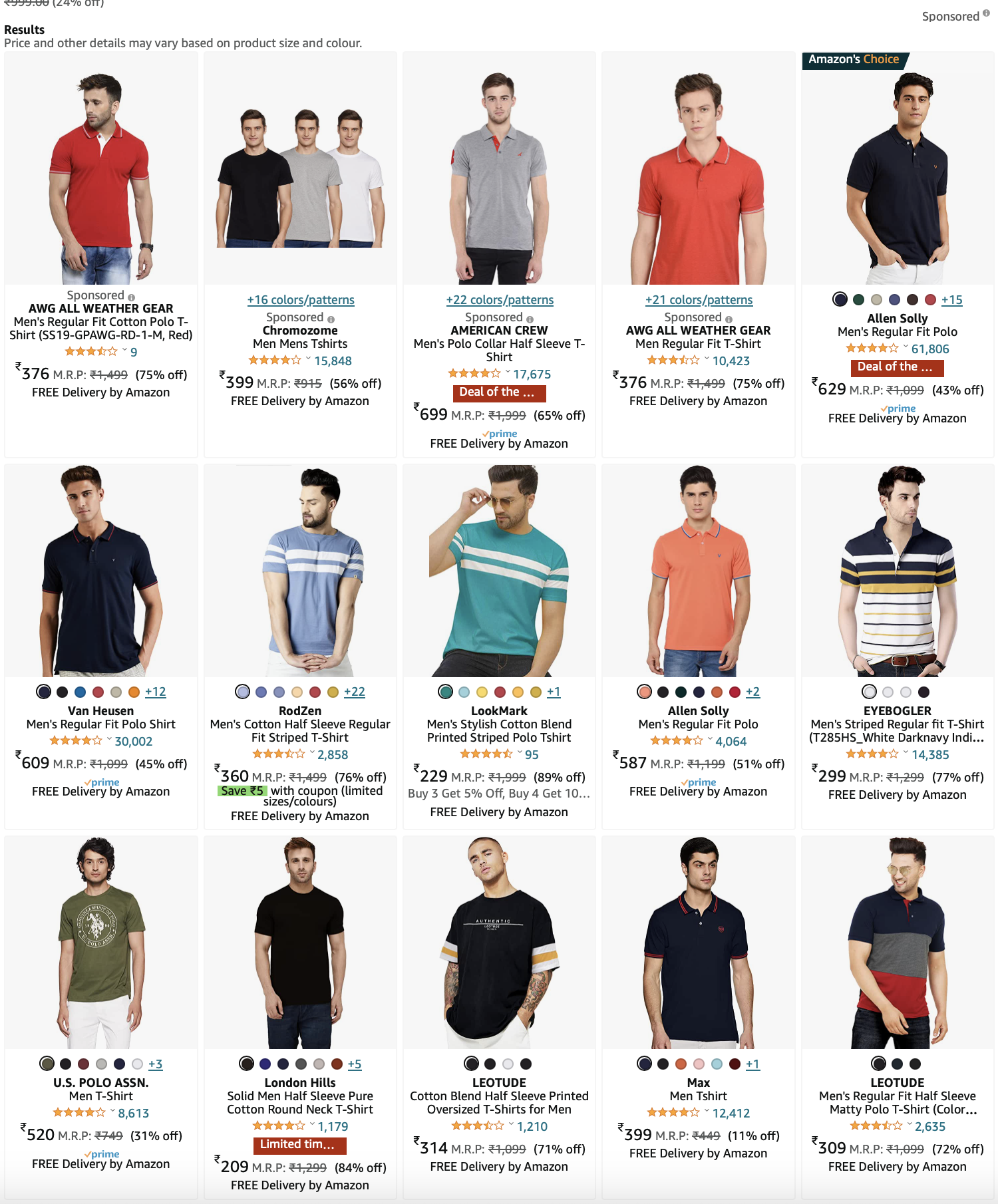}\caption{Product Rank of T-Shirt for Men-Example from amazon.in}\label{figure4}
\end{figure} 
\begin{table}[H]
	\caption {Two Stage Assortment Planning Algorithm} \label{tab:title}
	\begin{center}
	\begin{tabular}{@{}|l|l|l|l|l|@{}}
		\toprule
		\textbf{\begin{tabular}[c]{@{}l@{}}Product Name\\        (1)\end{tabular}} & \textbf{\begin{tabular}[c]{@{}l@{}}Number of Reviews               \\             (2)\end{tabular}} & \textbf{\begin{tabular}[c]{@{}l@{}}Average Ratings             \\           (3)\end{tabular}} & \textbf{\begin{tabular}[c]{@{}l@{}}Price  \\      (4)\end{tabular}} & \textbf{\begin{tabular}[c]{@{}l@{}}Purchase Probability \\               (5)\end{tabular}} \\ \midrule
		A                                                                          & 61806                                                                                               & 4                                                                                             & 629                                                                 & 0.95                                                                                       \\ \midrule
		B                                                                          & 30002                                                                                               & 4                                                                                             & 700                                                                 & 0.85                                                                                       \\ \midrule
		C                                                                          & 2858                                                                                                & 3.5                                                                                           & 360                                                                 & 0.40                                                                                       \\ \midrule
		D                                                                          & 95                                                                                                  & 4.5                                                                                           & 229                                                                 & 0.10                                                                                       \\ \midrule
		E                                                                          & 4064                                                                                                & 4                                                                                             & 587                                                                 & 0.55                                                                                       \\ \midrule
		F                                                                          & 14385                                                                                               & 5                                                                                             & 299                                                                 & 0.75                                                                                       \\ \midrule
		G                                                                          & 8613                                                                                                & 4                                                                                             & 520                                                                 & 0.65                                                                                       \\ \midrule
		H                                                                          & 1179                                                                                                & 4                                                                                             & 209                                                                 & 0.20                                                                                       \\ \midrule
		I                                                                          & 1210                                                                                                & 3                                                                                             & 314                                                                 & 0.15                                                                                       \\ \midrule
		J                                                                          & 12412                                                                                               & 4                                                                                             & 399                                                                 & 0.72                                                                                       \\ \bottomrule
	\end{tabular}
\end{center}
\end{table}

\begin{table}[H]
	\caption {Two Stage Assortment Planning Algorithm-Continue} \label{tab:title continue}
	\begin{center}
	\begin{tabular}{|l|l|l|l|l|}
		\hline
		\textbf{(6)=(2)X(3)} & \textbf{(7)=(2)X(4)} & \textbf{\begin{tabular}[c]{@{}l@{}}First Stage Ranking\\                (8)\end{tabular}} & \textbf{\begin{tabular}[c]{@{}l@{}}Second Stage Ranking\\                   (9)\end{tabular}} & \multicolumn{1}{c|}{\textbf{\begin{tabular}[c]{@{}c@{}}Final Rank\\ (10)\end{tabular}}} \\ \hline
		247224               & 38875974             & F                                                                                         & A                                                                                             & A                                                                                       \\ \hline
		120008               & 21001400             & A                                                                                         & B                                                                                             &                                                                                         \\ \hline
		100003               & 1028880              & B                                                                                         & F                                                                                             &                                                                                         \\ \hline
		427.5                & 21755                &                                                                                           &                                                                                               &                                                                                         \\ \hline
		16256                & 2385568              &                                                                                           &                                                                                               &                                                                                         \\ \hline
		71925                & 4301115              &                                                                                           &                                                                                               &                                                                                         \\ \hline
		34452                & 4478760              &                                                                                           &                                                                                               &                                                                                         \\ \hline
		4716                 & 246411               &                                                                                           &                                                                                               &                                                                                         \\ \hline
		3630                 & 379940               &                                                                                           &                                                                                               &                                                                                         \\ \hline
		49648                & 4952388              &                                                                                           &                                                                                               &                                                                                         \\ \hline
	\end{tabular}
\end{center}
\end{table}
\section{The Model: Competitive Assortment Algorithm}
A recent article studies two objectives: (i) maximizing the platform’s market share and (ii) maximizing the customer’s welfare \cite{derakhshan2022product}. When a customer (she) arrives, she views the products sequentially. However, the retailer cannot perfectly predict how many products she is willing to view. customers have attention spans, i.e., the maximum number of products they are willing to view and inspect the products sequentially before purchasing a product or leaving the platform empty-handed when the attention span gets exhausted. The customer seems to be more likely to buy a product ranked at the top, even though there is another similar product below inside her attention span. In this connection, an important question has been solved on how the ranking is linked to the revenue maximizing\cite{chen2021revenue}. A competitive search algorithm must fulfill the two important criteria. First is that each non-sponsored vendor should have equal market access and second each customer should have been exposed to all the possible low-price and quality alternatives during the search process. What exactly is a competitive search algorithm? A competitive search algorithm solves problems that have two sides. The first is the problem of the traders/vendors and the platform's incentive, and the second is the problem of the customer. When a search query is entered, the search algorithm should prioritize the best vendors who meet the query and the criteria for being considered a good vendor. Second, the search algorithm should take into account the customer's best matching vendors among the above-mentioned vendors.\\
This is based on the theory that a group of firms (vendors) will compete with one another and eventually compete in the market, based on their strengths and quality. These vendors are the best firms in the market that are available to customers. The customer will choose vendors who minimize mismatch and thus solve the customer's choice problem. The section that follows will help you understand the customer's mismatch problem.
\subsection{Vendors Ranking Problem}
Before we begin deriving the algorithm, let us first understand the problem of vendor shorting and then the problem of mismatch of the customer's choice set using examples. First-level shortlisting is the problem of assigning each indivisible vendor to each position, which will be done in terms of ranking. And this ranking will be based on the inbuilt score, which is based on a set of criteria, such as review ratings, comments, price, a product in high demand, and so on... This must be done by all of the parameters.  This has already been explained in the preceding sections. The model of assigning an equal number of indivisible objects to an equal number of places based on their scores is the first-level ranking. This has been explained using Shapley and Scarf's model \cite{SHAPLEY197423}, which considers each criterion as an agent influencing the assignment of each object to each place and derives the final assignment. The prescribed collusive ranking model in this article demonstrates that if the platform acts as an agent and redefines a new assignment, the final assignment will be changed to a collusive or joint profit-maximizing assignment based on the criteria. It demonstrates that if any of the following criteria are proposed, namely \textit{products related to this item}, \textit{more to consider from made for Amazon},\textit{compare with similar items},\textit{customers who viewed this item also viewed}, etc...  then the allocation will be collusive. By minimizing the customer's mismatch problem, the next level ranking or shortlisting vendors is to be done out of the first level shortlisting vendors. This mismatch can be reduced by updating online reviews with a Bayesian approach. Because online reviews are nothing more than sequential voting, they contain a history of customer preferences. As a result, relying solely on the current review comments is insufficient to assess the object's quality. This information must also be updated to determine the quality of future services.
\subsection{Model Setup: Problem Formulation}
Imitation of third-party products and self-preferencing are important tools used to direct customers to the platform's products\cite{hagiu2022should}. As a result, it should maintain equal access from both sides, namely the customer and the firms. On E-Commerce platforms, there are currently two types of seller ranking. The first is sponsored, while the second is not. Sellers who chose the sponsored ranking mechanism must participate in the auction bidding process to win the slot. The remainder is managed by manipulating keywords, review ratings, review comments, and so on. Furthermore, platforms have some hidden mechanisms in place to push their products. The current article proposes a general model for retrieving the query so that every seller can benefit concerning their strength, particularly those who do not participate in the sponsored ranking mechanism. The proposed model is the development of an algorithm that solves the preceding two problems so that the search algorithm is not collusive. Otherwise, the search algorithm will be collusive, and the platform will have the power to control the market in favor of certain interest groups.
\subsubsection{Analysis: Assortment Planning and the Revenue Maximization Problem}
We consider a marketplace (platform) where a set of $ K $ substitutable goods or services offered by the $ K $ individual vendors, henceforth, called the \textit{products}-are offered to a market of customers who decide whether to buy one of them or to choose a no-purchase option. Let $ [k]\triangleq \{1,...,k\} $ denote the universal set of products. The online platform (he) chooses an assortment $ S\subset [k] $ to display and rank them in order within $ M  $ slots.  This has to be done in tow stage process; $ S_{1}\& S_{2}(=S) $. First $ S_{1}\subset [k] $ thereafter $ S_{2}(=S) \subseteq S_{1}$. The platform sells this single product over a horizon of length $ T $ and each vendor is holding more than one product. It will be convenient to think of this planning horizon in terms of customer arrivals. Consider, customers arrive sequentially and are indexed by $ t\in \{1,..., T\}, $ and it will be assumed henceforth that $ T $ is known to the platform. customers are indexed by $n=1,2,...$. Each customer purchases at most one product upon arrival and gives review comments about the product in the post-purchase situation. It does not buy her returns to the market without giving the review comments. The platform can influence demand by varying the price it offers to different customers and by altering the product rank concerning time $ t $. The product’s price history is not observable to customers, which is suggestive of the practice in most online marketplaces. The demand also depends on the product’s quality, which is unknown to customers upon arrival. Reviews reported by previous buyers provide public information about the product, and customers use this information to estimate the product’s quality. In what follows, we formalize the functional relationships among demand, price, and review outcomes. $q_{k}$  represents the intrinsic quality of product $ k $ and $ p_{k} $ is the price of the product. The index $ k=0 $ indicates the no-purchase option, which has known intrinsic quality $ q_{0}=0 $ and a price $ p_{0}=0 $.
\paragraph{Product rankings and search costs.} The search cost associated with a given product depends on the position in which the product is displayed to customers. To formalize the search cost as a function of product ranking, we let $ \Gamma^{K} $ denote the set of all permutations of $ \{1,..., K\} $ and let $ \triangle(\Gamma^{K}) $ represent the space of all probability distributions over $\Gamma^{K} $. Elements of $ \Gamma^{K} $ will be referred to as position assignments or, more simply, as rankings. Given a product ranking $\mathbf{\gamma}=\{ \gamma_{1},...,\gamma_{K} \}\in  \Gamma^{K}$, $ \gamma_{k}=j $ indicates that product $ k $ occupies the $ j^{th} $ highest position in the ranking. For instance, when $ \gamma_{k}=1 (\gamma_{k}=K)$, product $ k $ occupies the highest (lowest) position. In particular, if a product occupies the $ j^{th} $ highest position in the ranking, customers incur a search cost $ g(j) $, where $ g:\mathbb{N}\rightarrow \mathbb{R} $ is a strictly increasing function; without loss of generality, we normalize $ g(1)=0 $.
\paragraph{Customers' (she) attention span.} After arriving on the platform the customer views the products sequentially. At this moment it is difficult to predict by the platform that how many products she is willing to buy and which products. Let it is captured by the random attention span $ Y $, with distribution $ h_{y}\triangleq Pr(Y=y) $ and $ H_{y}\triangleq Pr(Y\geq y)$ for $ y=1,2,... $. For a customer with an attention span of $ y $ drawn from $ Y $, she views the first product and buys it if it is satisfactory, and so on. And let when it is captured by the fixed attention span then it is $ y $.
\paragraph{Customers' valuation.} I assume that the value of a product $ k $ to customer $ n $  at time $ t $ is defined by $ x_{n,k,t}=\beta-\alpha p_{n,k,t}+ q_{n,k,t}-g(\gamma_{k}) $
 where $ p_{n,k,t}\in[\underline{p},\bar{p}] $, with $ \underline{p}<\bar{p} $ is the price quoted by the platform to customer $ n $ for product $ k $ at time $ t $. 
The intercept $ \beta $ reflects the product’s value that is common to customers and the slope $ \alpha $ represents customers’ sensitivity to price. The value of $ q_{t} $ represents the quality experienced by customer $ n $ after purchase, which is taken to be an independent normal random variable with mean $ \mu $ and variance $ \sigma^{2} $. The parametric structure of the underlying distribution for $ q_{n,k,t} $ is known to customers except for its mean $ \mu $. I assume that the value of $ \sigma $ is known for simplicity in exposition and transparent analysis. The case with unknown $ \sigma $ can be incorporated into our model without changing our main insights, albeit with some technical difficulties associated with the Bayesian inference with unknown variance. We assume that the seller has full information on the distribution of $ q_{n,k,t} $. Without loss of generality, we normalize $ (\alpha, \beta) = (1, 0) $ in our theoretical analysis so that the valuation of customer $ n $ is written as $ x_{n,k,t}=q_{n,k,t}-p_{n,k,t}-g(\gamma_{k}) $ Note that the value of $ q_{n,k,t} $, and hence $ x_{n,k,t} $, is unknown to customer $ t $ before purchase. Given a vector of quality estimates $  \mathbf{\hat{q}_{n,k,t}} $, and a position assignment $ \mathbf{\gamma}\in\Gamma^{K} $, customer $ n $ assigns a utility $ \hat{q}_{n,k,t}-p_{k}-g(\gamma_{k}) $ to the purchase of product $ k $ and a utility of $ \hat{q}_{n,0,t}-p_{0}-g(\gamma_{0})=0 $ to the outside option.
\paragraph{Customers' purchase decisions.}Then she buys the product by maximizing the expected utility i.e. $ \arg \max_{k=0,1,..., K} E[\{\hat{q}_{n,k,t}-p_{k}-g(\gamma_{k})\}]$, where $ g(\gamma_{0})=0 $. Customer $ n $ observes some information $ \Phi $ on past reviews before making the purchasing decision. The expected utility $ \chi_{n,k,t}$ of a purchase given $ \Phi_{n,k,t} $ can be written as
\begin{equation*}\label{1}
	\chi_{n,k,t}=E[\mu\mid \Phi_{n,k,t}]-p_{n,k,t}-g(\gamma_{k})
\end{equation*}
The probability of purchase follows the familiar logit model, hereafter referred to as the demand function, denoted by $ \lambda (p_{t},\Phi_{t}, \gamma)$ where
\begin{equation}\label{2}
	\lambda (p_{t},\Phi_{t}, \gamma) =\dfrac{\exp(\chi_{n,k,t})}{1+\exp(\chi_{n,k,t})}= \dfrac{\exp(E(\mu\mid\Phi_{t})-p_{t}-g(\gamma_{k})}{1+\exp(E(\mu\mid\Phi_{t}-p_{t}-g(\gamma_{k}))}
\end{equation}
\paragraph{Customers’ belief updating rules.}  The belief updating rules are based on \cite{shin2022dynamic}. To be specific, if customer $ n $ purchases the product, she reports a rating of $ q_{n,t} $. Then, the dynamics
of the review platform is described by $ \Phi_{n,k,t}=(n_{t},\bar{q}_{t}) $ where $ n_{t} $, is the number of reviews and $ \bar{q}_{t}=\frac{\sum_{i=1}^{n_{t}}q_{t(i)}}{n_{t}} $ for $ n\geq 2 $ is the average rating, and where $ t(i) $ is the time index of the $ i^{th} $ reviewer. For $ t=1, $ we let $ n_{1}=\bar{q}_{1}=0 $. The state of the review platform is updated as follows:
\begin{equation*}\label{3}
	(n_{t+1},\bar{q}_{t+1}) = 	\left\{
	\begin{array}{ll}
	(n_{t}+1,\frac{n_{t}\bar{q}_{t}+q_{n_{t+1}}}{(n_{t}+1)}) & \mbox{if customer $ n $ purchases,}  \\
	(n_{t},\bar{q}_{t}) & \mbox{otherwise}
	\end{array}
	\right.
\end{equation*}
Customers share a common prior belief, expressed in our model through a normal random variable $ \bar{q}_{0}\sim N(\mu_{0},\sigma_{0}^{2})$. After observing the state $ \Phi $ of the review platform, customer $ t $ updates her belief according to Bayes' rule. The mean of the posterior belief denoted by $ E[\mu\mid \Phi_{t}] $ as 
\begin{equation}\label{4}
	\mu_{t}=\dfrac{1}{\rho n_{t}+1}\mu_{0}+\dfrac{\rho n_{t}}{\rho n_{t}+1}\bar{q}_{t}
\end{equation}
and $ \rho=\dfrac{\sigma_{0}^{2}}{\sigma^{2}} $. The above formula and the derivations are taken from \cite{degroot2005optimal,berger2013statistical}. Hence equation (1) can be re-written as follow:
\begin{equation}\label{5}
	\lambda (p_{t},\Phi_{t}, \gamma) =\dfrac{\exp(\chi_{n,k,t})}{1+\exp(\chi_{n,k,t})}= \dfrac{\exp(E(\mu_{t}-p_{t}-g(\gamma_{k})}{1+\exp(\mu_{t}-p_{t}-g(\gamma_{k}))}
\end{equation}
\textbf{The Platform's Ranking Problem.}The platform does not know $ \mathbf{q} $ and receives a share $0< \omega_{k} \leq1 $ of every payment that takes place on its website, i.e., the platform realizes a revenue $ \omega_{k} p_{k} $ whenever product $  k $ is sold. This means if the platform promotes its brand let $ j $ then the $ \omega_{j} p_{j} =1.p_{j}$. This means there is an incentive to promote own brand. Let $ \mathbf{\sigma}_{n}=\{\sigma_{1,n,t},...,\sigma_{K,n,t}\} $ be the position assignment observed by customer $ n $, where $ \sigma_{k,n,t}=j $ indicates that product $ k $ is displayed in position $ j $ to customer $ n $ at time $ t $. I use $ \sigma\in \pi(S) $ to refer to a permutation $ \sigma $ of $ S $, where $ \pi (S) $ is the set of all permutations of items in $ S $. For $ i\in \{1,...,\mid S\mid \} $, I use $ \sigma(i)\in S $ to denote the product displayed in the $ i^{th} $ position. This means say for instance, $ \sigma(j)\in \{1,...,j,...,\mid S\mid \} $, then for a product $ j\in S $, the platform displays it in the $ \sigma^{-1}(j)^{th} $ position.\\
Therefore, each product $ j \in [k] $  is associated with a conditional purchase probability $ \lambda_{j} $, price $ p_{j} $, $ n_{t}, \& \bar{q}_{t} $ i.e. $ j_{ (\lambda_{j},p_{j},n_{t},\bar{q}_{t})} $.  I index the products in two steps. In the first step the shorting will be done by using parameters $ n_{t}, \& \bar{q}_{t} $ and thereafter using $ \lambda_{j}, \& p_{j} $.
\begin{definition}
	The term $  \tilde{n} $ is defined as the considerable number of reviews in stage 1. Below this number, the products will not be treated as quality though the average quality is good enough. This is because a significant average quality may be formed by two reviewers. The term $   \tilde{n} $ may be calculated as the weighted average  i.e. 
	$ \tilde{n}=\dfrac{\sum_{k=1}^{K}\bar{q}_{k\mid t}n_{k\mid t}}{\sum_{k=1}^{K}\bar{q}_{k\mid t}} $.
\end{definition}
\begin{definition}
	The term $  \breve{n}$ is defined as the considerable number of reviews in stage 2. Below this number, the products will not be treated as review maximizers though the average price is good enough. This is because a significant average revenue may be formed by two reviewers. The term $   \tilde{n} $ may be calculated as the weighted average  i.e. 
	$\breve{n}=\dfrac{\sum_{k=1}^{K}p_{k\mid t}n_{k\mid t}}{\sum_{k=1}^{K}p_{k\mid t}} $.
\end{definition}
\paragraph{Product Ranking Process.} The following steps are to be followed by the platform to maximize the revenue and the welfare that will fulfill the competition of competition. The quality firms will not be eliminated in the final ranking process.\\
So, in the first step weights will be the quality and in the second step, the price will be the weights respectively.
\begin{center}
	\textbf{STAGE 1}
\end{center}
\begin{equation*}\label{6}	
	\begin{array}{ll}
		\bar{q}_{(1,n,t\mid n_{t}\geq \tilde{n})} \geq 	\bar{q}_{(2,n,t\mid n_{t}\geq \tilde{n})}\geq...\geq 	\bar{q}_{(k,n,t\mid n_{t}\geq \tilde{n})}; 	\bar{q}_{(j,n,t\mid n_{t}\geq \tilde{n})}\geq 	\bar{q}_{(j+1,n,t\mid n_{t}\geq \tilde{n})} & \mbox{if}; n_{j,t}=n_{j+1,t}
	\end{array}
\end{equation*}
This means first-level shorting will be done by ranking the highest average rating product and following the descending order subject to the condition that the number of reviewers would be at least $  \tilde{n}$. In other words, the products are indexed in the descending order of their average review rating and then in the descending order of their average review rating if the number of reviews is equal. 
\paragraph{Remarks.} The platform is to select $ S $ for $ M $ slots i.e $ S\leq M $. So in this step, let the number of choices be $ S_{1} $ and is more than $ S $, i.e. $  S \leq M \leq S_{1}  $. This first level of shorting will be based on the individual vendors' quality and competitiveness. $ q_{k} \& n_{k} $. The higher the quality and the number of past users higher is the access to the market. This is the incentive to participate in the ranking process by the vendors in a platform. If this is not maintained then the vendors will not participate.
\begin{center}
	\textbf{STAGE 2}
\end{center}
\begin{equation*}\label{7}	
	\begin{array}{ll}
		p_{(1,n,t\mid n_{t}\geq \breve{n})} \geq 	p_{(2,n,t\mid n_{t}\geq \breve{n})}\geq...\geq 	p_{(k,n,t\mid n_{t}\geq \breve{n})}; 	\lambda_{(j,n,t\mid n_{t}\geq \tilde{n})}\geq 	\lambda_{(j+1,n,t\mid n_{t}\geq \breve{n})} & \mbox{if}; p_{j,t}=p_{j+1,t}
	\end{array}
\end{equation*}
In other words, the products are indexed in the descending order of their prices and then in the descending order of their conditional purchase probabilities if the prices are equal. This is the final rank or list of products arranged for the customer i.e. $ \mathbf{\sigma}_{n}=\{\sigma_{1,n,t},...,\sigma_{K,n,t}\} $. Hence, from the above steps, it is clear that the ranking is dependent on four factors i.e. $ \sigma_{n}=f(p,\lambda, n, q) $.
\paragraph{Remarks.} Let the number of the shortlisted firms in this step be $ S_{2}=S\leq M $. Here is the chance of imposing a higher price to maximize the revenue. Hence, the platform will try to match the high-quality vendors with higher prices and put them at a higher position in a rank table. This will ensure a higher probability of buying the product.
\paragraph{Purchase probability.}I assume no products have identical characteristics (the combination of price and conditional purchase probability, average quality(review rating), number of reviews, etc...). Given that position $ k $ is within a customer’s attention span, her purchase probability of product $ \sigma_{k,n\mid t,}$ is
\begin{equation*}
	\prod_{i=1}^{k-1}(1-\lambda_{\sigma_{i,n\mid t}}).\lambda_{\sigma_{k,n\mid t}}
\end{equation*}
In other words, the customer purchases product $ \sigma_{k,n\mid t,}$ if it is satisfactory while all products displayed earlier are not. $ (1-\lambda_{\sigma_{i,n\mid t}}) $ is the effect that a product at position $ i  $ exerts on the products displayed later. This is the probability that the $ i^{th} $ product has not been preferred over $ k $. 
\subsubsection{Revenue Maximization for the Retailer}
The platform's goal is to choose an assortment $ S $ of at most $ M $ products, as well as its ranking $ \sigma\in \pi(S) $, to maximize expected revenue from a potential customer and to maintain the market competitive such that each vendor will have got market access concerning the competitive strength. Therefore, the expected revenue from a customer with an attention span of $ y $ can be expressed as
\begin{equation}\label{8}
	\mathcal{R}(\sigma, y)\triangleq \sum_{k=1}^{y\wedge M} \prod_{i=1}^{k-1}(1-\lambda_{\sigma_{i,n\mid t}}).\lambda_{\sigma_{k,n\mid t}}. p_{\sigma_{k,n\mid t}}.\omega_{k}
\end{equation}
\paragraph{Remarks}
The output of the given in Eq.\ref{8} is a mathematical function $\mathcal{R}(\sigma, y)$ that takes two inputs $\sigma$ and $y$ and outputs a scalar value. The function is defined as:\\
$\mathcal{R}(\sigma, y)=\sum_{k=1}^{\min(y,M)}\left[\prod_{i=1}^{k-1}(1-\lambda_{\sigma_{i,n\mid t}})\right]\lambda_{\sigma_{k,n\mid t}}p_{\sigma_{k,n\mid t}}\omega_{k}$ The function computes a sum over all possible products of $k$ from 1 to the minimum of $y$ and $M$. For each product of $k$, the function computes a product over all values of $i$ from $ 1 $ to $k-1$. Inside the product, we have the term $(1-\lambda_{\sigma_{i,n\mid t}})$, which is multiplied together for all values of $i$ up to $k-1$.\\
Outside the product, we have three terms multiplied together: $\lambda_{\sigma_{k,n\mid t}}$, $p_{\sigma_{k,n\mid t}}$, and $\omega_{k}$. These terms depend on the value of $\sigma_{k,n\mid t}$, which is determined by the terms of $\sigma$, $n$, and $t$.\\
The equation $ \mathcal{R}(\sigma, y)\triangleq \sum_{k=1}^{y\wedge M} \prod_{i=1}^{k-1}(1-\lambda_{\sigma_{i,n\mid t}}).\lambda_{\sigma_{k,n\mid t}}. p_{\sigma_{k,n\mid t}}.\omega_{k}  $ represents a function $\mathcal{R}$ with two variables $\sigma$ and $y$, where $\sigma$ is a sequence of length $M$ and $y$ is a positive integer.\\
If we change the rank of $\sigma$ while keeping $y$ constant, then the value of $\mathcal{R}(\sigma, y)$ will also change. Specifically, changing $\sigma$ will change the values of $\lambda_{\sigma_{k,n\mid t}}$, $p_{\sigma_{k,n\mid t}}$, and $\omega_{k}$, which will, in turn, change the value of the product term inside the summation. Thus, the entire sum will be affected, and the value of $\mathcal{R}(\sigma, y)$ will change.\\
If $\sigma$ changes, then the value of $\mathcal{R}(\sigma, y)$ will also change. The summation in the equation is over $k$ from 1 to $y \wedge M$, where $\wedge$ denotes the minimum operator. Thus, the value of $\mathcal{R}(\sigma, y)$ depends on the values of $\sigma$ for $k$ from 1 to $y \wedge M$.\\
The term $\lambda_{\sigma_{k,n\mid t}}$ depends on the value of $\sigma_k$, and since $\sigma$ is an integer, changing its value will change the value of $\lambda_{\sigma_{k,n\mid t}}$. The same is true for $p_{\sigma_{k,n\mid t}}$ and $\omega_k$.\\
Moreover, the product in the equation involves $\lambda_{\sigma_{i,n\mid t}}$ for $i$ from 1 to $k-1$. Changing the value of $\sigma$ for any of these indices will also change the value of $\mathcal{R}(\sigma, y)$.\\
In summary, any change in the value of $\sigma$ will change the value of $\mathcal{R}(\sigma, y)$, given $y$.\\
If we change the value of $y$, then the range of the summation will change, which will, in turn, change the value of the entire summation. Therefore, the value of $\mathcal{R}(\sigma, y)$ will also change with respect to the change of $y$.\\
The overall result of the function $\mathcal{R}$ is a scalar value.
When a customer arrives, as the retailer does not know her attention span, he takes the expected value of$  y \sim Y  $ in (4) and obtains the total expected revenue
	\begin{multline}\label{9}
	\mathbb{E}[\mathcal{R}(\sigma, y)]= \sum_{y=1}^{M}h_{y}\mathcal{R}(\sigma, y)=\sum_{y=1}^{M}h_{y} ( \sum_{k=1}^{y\wedge M} \prod_{i=1}^{k-1}(1-\lambda_{\sigma_{i,n\mid t}}).\lambda_{\sigma_{k,n\mid t}}. p_{\sigma_{k,n\mid t}}.\omega_{k}) \\
	=\sum_{y=1}^{M} \prod_{i=1}^{y-1}(1-\lambda_{\sigma_{i,n\mid t}}).\lambda_{\sigma_{y,n\mid t}}. p_{\sigma_{y,n\mid t}}.\omega_{y} . H_{y}
	\end{multline}
Therefore, the optimization problem for the firm is the joint assortment, i.e., choosing an $ S\subset K $
such that $ \mid S\mid \leq M $ and ranking, i.e., deciding $ \sigma\in \pi(S) $ to maximize the total expected revenue from a random customer, and each potential vendor should have got the equal chance to access the said customer:
\begin{equation}\label{10}
	\max_{S(=S_{2})\subseteq S_{1}\subset K,\sigma\in \pi(S)} \mathbb{E}[\mathcal{R}(\sigma, y)]
\end{equation}
\paragraph{Remarks}
The term $\max_{S(=S_{2})\subseteq S_{1}\subset K,\sigma\in \pi(S)} \mathbb{E}[\mathcal{R}(\sigma, y)]$ represents the maximum expected reward that can be achieved by selecting a subset $S_2$ of a larger set $S_1$, where $S_1$ is a subset of the set $K$, and $\sigma$ is a permutation of the elements in $S$. The expectation is taken over the function $\mathcal{R}(\sigma, y)$, which is a scalar value function that depends on the permutation $\sigma$ and a parameter $y$. The function $\mathcal{R}(\sigma, y)$ represents the reward obtained by selecting the first $k$ elements of the permutation $\sigma$, where $k$ is the minimum of $y$ and the size of $S_2$, and using them to calculate the expected reward.\\
Therefore, the term $\max_{S(=S_{2})\subseteq S_{1}\subset K,\sigma\in \pi(S)} \mathbb{E}[\mathcal{R}(\sigma, y)]$ represents the maximum expected reward that can be achieved by selecting a subset $S_2$ of a larger set $S_1$ and permuting its elements in a way that maximizes the expected reward, where the expectation is taken over the function $\mathcal{R}(\sigma, y)$.
\paragraph{Remarks}
The equation represents the expected value of the function $\mathcal{R}(\sigma, y)$, which is defined in terms of the parameters $\sigma$ and $y$.\\
The equation shows that the expected value is calculated as a sum over all possible values of $y$ from 1 to $M$, where $M$ is a fixed integer. The weight $h_y$ is a function of $y$, and the expression for $\mathcal{R}(\sigma, y)$ is evaluated at each value of $y$.\\
The expression for $\mathcal{R}(\sigma, y)$ involves a sum over $k$ from 1 to the minimum of $y$ and $M$. Within this sum, there is a product over $i$ from 1 to $k-1$ that involves the value of $\lambda_{\sigma_{i,n\mid t}}$. The notation $\sigma_{i,n\mid t}$ refers to the $i$-th element of the sequence $\sigma$ that has $n$ elements and is restricted to the first $t$ elements. The notation $y\wedge M$ denotes the minimum of $y$ and $M$.\\
The overall expression for $\mathcal{R}(\sigma, y)$ involves the variables $\lambda$, $p$, and $\omega$, which are multiplied together for each value of $k$ in the sum. The equation also includes a factor $H_y$, which is a product over $i$ from 1 to $y-1$ that involves the value of $\lambda_{\sigma_{i,n\mid t}}$.\\
Therefore, the equation shows how the expected value of the function $\mathcal{R}(\sigma, y)$ can be computed in terms of the parameters $\sigma$ and $y$, as well as the other parameters $\lambda$, $p$, and $\omega$.
\subsubsection{Competitive Assortment Planning}
The collusion free assortment planning can be achieved using a two Stage assortment algorithm. Therefore, the above optimization in (6) is to be achieved in two steps as follows.
\begin{center}
	\textbf{STAGE 1}
\end{center}
\begin{equation}\label{11}
	\max_{ S_{1}\subset K,\sigma\in \pi(S_{1})} \mathbb{E}[\mathcal{R}(\sigma, y)]
\end{equation}
\paragraph{Remarks}

The equation $\max_{ S_{1}\subset K,\sigma\in \pi(S_{1})} \mathbb{E}[\mathcal{R}(\sigma, y)]$ represents the maximum expected reward that can be achieved by selecting a subset $S_1$ of the action set $K$ and a policy $\sigma$ that belongs to the set of admissible policies $\pi(S_1)$ that can be constructed using only the actions in $S_1$.

In other words, the equation seeks to find the optimal subset $S_1$ and policy $\sigma$ that can maximize the expected reward. The optimization problem is constrained by the fact that the policy $\sigma$ can only use the actions in the subset $S_1$ to construct admissible policies.

Overall, the equation is useful in decision-making problems where there is a large set of possible actions and the objective is to find the subset of actions that can lead to the maximum expected reward. In  Stage 1 the platform first selects assortment $ S_{1} $ from $ [k] $ considering quality parameter $ \bar{q} $ only keeping other parameters as given and using $ \tilde{n} $. Thereafter, a ranking has to be done on $ S_{1} $.
\begin{center}
	\textbf{STAGE 2}
\end{center}
\begin{equation}\label{key}
	\max_{ S\subseteq S_{1},\sigma\in \pi(S)} \mathbb{E}[\mathcal{R}(\sigma, y)]
\end{equation}

\paragraph{Remarks.} The result $ \max_{ S\subseteq S_{1},\sigma\in \pi(S)} \mathbb{E}[\mathcal{R}(\sigma, y)] $ is the maximum expected reward that can be achieved by selecting a subset $S$ of $S_{1}$ and a corresponding policy $\sigma$ such that $\sigma$ only uses the state variables in $S$. In other words, we are maximizing the expected reward over all possible subsets $S$ of $S_{1}$ and policies $\sigma$ that only use the state variables in $S$. This result is useful for determining the optimal subset of state variables to use in a policy, as it tells us the maximum expected reward we can achieve by only considering a subset of the state variables. In  Stage 2 the platform first selects assortment $ S $ from $ S_{1} $ considering price parameter $ p $ only keeping other parameters as given and using $ \breve{n} $. Thereafter, a ranking has to be done on $ S $ for the customer.\\
Let the situation where, the ranking is done by the objective function $ \textit{max}_{S(=S_{2})\subset K,\sigma\in \pi(S)} \mathbb{E}[\mathcal{R}(\sigma, y)] $ instead of equation (6 ) using the two-stage assortment process then there will be a chance of accepting low-quality products and elimination of good quality products with competitive strength. \\
From (8) we can see that a product displayed at a later slot does not cannibalize the demand and therefore the revenue for the earlier. As a result, the optimal ranking would occupy all the $ M  $ slots and we only focus on assortments such that$  \mid S\mid  = M $. If the proposed two-stage ranking follows then there may still be a chance of including the personal brand of the platform in the final rank but the probability that the elimination of the competitive vendors from the final rank who are qualified will be lower. \\
Note that one customer purchasing at most one product is a defining feature of discrete choice models. Here it has been considered that the customer is buying a single item. The assumption may be limiting in online retailing: for example, a customer may add several products to the cart on Amazon before checking out. we extend the model to capture multiple purchases. That is, a customer may continue viewing and selecting the products after making some purchases.

\begin{lemma}
	To achieve competitive assortment planning the two threshold levels have been proposed in Stage 1 and Stage 2 respectively $ \tilde{n}\& \breve{n} $. These two-stage processes along with the two threshold levels will eliminate the possibility of self-promotion and increase the competition. This means fixing the attention span at $ y $. Suppose the assortment $ S $ with size $ \mid S\mid=y $ is given and the rank of products $ \sigma^{y} $ maximizes $ R(\sigma,y) $ among $ \sigma\in \pi(S) $ subject to fulfilling the two-stage threshold levels viz. $ \tilde{n}\& \breve{n} $. Then we have that the products in $ \sigma^{y} $ are deployed in ascending order concerning average quality those have minimum $ \tilde{n}$ number of reviews in the first stage and thereafter rearranging them concerning price those have minimum $ \breve{n} $ number of reviews i.e. $ \sigma^{y}(i)\leq \sigma^{y}(i+1) $ for $ i=1,..., \mid S\mid-1  $.
\end{lemma}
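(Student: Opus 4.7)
The plan is to use an adjacent-transposition (exchange) argument on the ranking $\sigma$ and tie it to the two-stage filter construction. Since $|S| = y$, the expression in equation (8) reduces to $\mathcal{R}(\sigma, y) = \sum_{k=1}^{y} \left[\prod_{i=1}^{k-1}(1-\lambda_{\sigma_i})\right] \lambda_{\sigma_k} p_{\sigma_k} \omega_k$, where the summation exactly exhausts every slot in $S$. This simplification removes the $y\wedge M$ truncation and fixes the length of the ranking we are optimizing over.

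Next, I would fix an arbitrary ranking $\sigma$ and consider the ranking $\sigma'$ obtained by swapping the products $a, b$ at two adjacent positions $i, i+1$. The contributions from positions $1, \ldots, i-1$ agree in $\sigma$ and $\sigma'$, and the contributions from positions $i+2, \ldots, y$ also agree, because the prefix no-purchase probability enters every subsequent term only through the symmetric factor $(1-\lambda_a)(1-\lambda_b)$. The difference $\mathcal{R}(\sigma,y) - \mathcal{R}(\sigma',y)$ therefore localizes to positions $i, i+1$, and expanding $\lambda_a p_a \omega_a + (1-\lambda_a)\lambda_b p_b \omega_b$ against its swapped counterpart yields a common positive prefix multiplied by $\lambda_a \lambda_b (p_a \omega_a - p_b \omega_b)$. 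Because the permutation group on $S$ is generated by adjacent transpositions, global optimality of $\sigma^y$ is then equivalent to the sequence $p_{\sigma^y(k)} \omega_{\sigma^y(k)}$ being non-increasing in $k$.

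I would then connect this local ordering to the two-stage construction. Stage 1 admits into $S_1$ only products whose review count exceeds $\tilde{n}$ and orders the survivors by average quality; Stage 2 further restricts to those in $S_1$ clearing $\breve{n}$ and orders them by price, producing $S$ and its internal rank. Under the hypothesis that $|S| = y$ and that $S$ is fixed, the quality step plays no role in the ranking \emph{within} $S$: quality and the threshold $\tilde{n}$ enter only through the admission decision that constructs $S_1$, while $\breve{n}$ filters $S_1$ down to $S$. Within $S$, the Stage 2 price-descending order then agrees with the $p\omega$-descending order dictated by the exchange argument, so $\sigma^y$ coincides exactly with the rank produced by the two-stage procedure, which is the stated $\sigma^y(i)\le\sigma^y(i+1)$ (read as ascending two-stage index).

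The main obstacle will be reconciling the exchange comparator $p_k \omega_k$ with the stated two-stage ordering, since the revenue formula indexes $\omega$ by position whereas the narrative treats it as product-specific; I would carry $\omega$ symbolically through the calculation so that the argument is valid under either reading, and observe that when $\omega$ is uniform across the non-platform products the comparator collapses to $p_k$, exactly recovering Stage 2. A secondary delicate point is the handling of ties: the lemma implicitly assumes strict inequalities (distinct characteristics), so the exchange step identifies a \emph{unique} optimum; if ties are permitted, the Stage 1 tie-breaker on review counts and the Stage 2 tie-breaker on conditional purchase probabilities declared earlier in the model must be invoked to make the two-stage index well-defined.
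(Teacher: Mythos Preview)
Your proposal is correct and follows essentially the same adjacent-transposition (swap) argument as the paper: both localize the revenue difference to positions $i$ and $i+1$, expand to obtain the key comparator $\lambda_{\sigma(i)}\lambda_{\sigma(i+1)}\bigl(p_{\sigma(i)}\omega_{\sigma(i)}-p_{\sigma(i+1)}\omega_{\sigma(i+1)}\bigr)$, and then read off the monotone $p\omega$ ordering that coincides with the Stage~2 price-descending rule. The paper additionally front-loads a paragraph on the platform's self-promotion incentive ($\omega_j=1$ for its own brand) before the swap calculation, but the analytic core is identical; your explicit remarks on the $\omega$ position-vs-product indexing and on tie-breaking are clarifications the paper leaves implicit.
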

\begin{proof}
The proof is in the appendix.
\end{proof}
\begin{proposition}
	 A competitive assortment planning method handles two-sided challenges. The first is a problem with traders/vendors and the platform's motivation, and the second is a customer concern. If each non-sponsored vendor has equal market access, and if each consumer is exposed to all viable low-cost and high-quality alternatives during the search process, competition in assortment planning will be achieved. So, in the first stage, vendors will be ranked based on their quality if they have at least $ \tilde{n} $ ratings. This will achieve the equal access criterion, and customers will be able to view the products if they have a better rank and at least $ \breve{n} $ reviews.	
\end{proposition}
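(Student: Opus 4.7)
The plan is to decompose the proposition's competition claim into its two operational conditions---equal market access on the vendor side, and exposure to viable low-price, high-quality alternatives on the customer side---and show that each is discharged by one stage of the algorithm built in Section 3.2.3. The tools are Lemma 1 (which fixes the revenue-maximizing arrangement within the threshold-filtered sets), the expected-revenue expression $\mathbb{E}[\mathcal{R}(\sigma,y)]$ of equation (9), and the endogeneity of the thresholds $\tilde{n}$ and $\breve{n}$ from Definitions 1--2.

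First I would formalize equal market access as follows: the rule that decides whether vendor $k$ enters $S_{1}$ must depend only on the publicly observable pair $(\bar{q}_{k,t}, n_{k,t})$, and not on any platform-specific preference weight such as $\omega_{k}$. Stage 1 admits exactly those products with $n_{k,t}\ge\tilde{n}$ and orders them by $\bar{q}_{k,t}$. Because $\tilde{n}=\sum_{k}\bar{q}_{k,t}n_{k,t}/\sum_{k}\bar{q}_{k,t}$ is computed from the panel of candidate vendors itself, the cutoff is common to all candidates and cannot be tilted towards a preferred vendor without altering the whole data. Hence every vendor clearing the quality cutoff enters $S_{1}$ irrespective of the platform's revenue share $\omega_{k}$, which is precisely the equal-access requirement for non-sponsored vendors.

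Next I would address customer exposure through Stage 2. Substituting the demand function (5) into the per-slot contribution of $\mathbb{E}[\mathcal{R}(\sigma,y)]$ in (9), a product combining a high posterior mean $\mu_{t}$ (inherited from its Stage-1 quality) with a low price $p_{\sigma_{k}}$ yields both a larger $\lambda_{\sigma_{k}}$ and a larger marginal contribution to expected revenue. By Lemma 1 the revenue-maximizing arrangement among the Stage-2 survivors places the ascending-price, high-$\lambda$ products within the attention span $y$. Consequently every competitive alternative that clears $n_{k,t}\ge\breve{n}$ is placed where the customer will actually inspect it, which discharges the consumer-side condition. The threshold $\breve{n}$ is again endogenous and price-weighted, so the platform cannot suppress a low-cost competitor without simultaneously depressing its own expected revenue.

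Finally, I would close the argument by ruling out collusive deviations. Suppose the platform inserts a preferred product $j$ with $n_{j,t}<\tilde{n}$ (or $n_{j,t}<\breve{n}$) into $\sigma_{n}$: by construction $j$ is filtered out of $S_{1}$ (or $S$), so the insertion forces eviction of a threshold-clearing product, and Lemma 1 implies this strictly lowers $\mathbb{E}[\mathcal{R}(\sigma,y)]$, contradicting optimality of (10). The main obstacle I anticipate is not the algebra but the formalization itself: the informal two-sided criterion mixes a market-design property (equal access) with a consumer-welfare property (exposure), and one must choose formal surrogates that (i) the Stage 1 and Stage 2 rules discharge separately and (ii) remain robust to the sponsored slots that sit outside the scoring mechanism. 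I would handle this by explicitly restricting the claim to the non-sponsored portion of the ranking, as the proposition itself indicates.
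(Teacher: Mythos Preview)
Your proposal is correct and shares the paper's two-part decomposition into a vendor-side (equal access) piece handled by Stage~1 and a customer-side (exposure) piece handled by Stage~2. The differences are in execution. For the vendor side, you formalize equal access as \emph{independence of the admission rule from $\omega_{k}$}, exploiting that $\tilde{n}$ is an endogenous statistic of the public data; the paper simply restates the Stage-1 rule and asserts it upholds the platform's quality motivation. For the customer side, the paper's device is the viewing probability $\Pr_{\sigma(i)}(\sigma)\triangleq\prod_{s=1}^{i-1}(1-\lambda_{\sigma(s)})$ and the observation that $\Pr_{\sigma(i)}(\sigma)=\Pr_{\acute{\sigma}(j+1)}(\acute{\sigma})$ for the non-swapped positions only when the Stage-1 and Stage-2 ordering conditions hold, whereas you route the argument through Lemma~1 and the revenue expression~(9) directly. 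Your additional step---ruling out insertion of a sub-threshold product $j$ by showing it forces eviction of a qualifying item and strictly lowers $\mathbb{E}[\mathcal{R}(\sigma,y)]$---has no counterpart in the paper's proof of this proposition (it is closer to the spirit of Lemma~1 and Proposition~3), but it does sharpen the claim by closing off the collusive deviation explicitly rather than leaving it implicit. Your caveat about restricting to the non-sponsored slots is well taken and matches the paper's scope.
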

\begin{proof}
	The proof is in the appendix.
\end{proof}
\paragraph{Remarks}This is an explanation of a two-stage algorithm to correct the issue of a platform promoting or placing its products in place of previously higher-ranked products.\\
The platform receives a share of every payment made on its website and therefore has an incentive to promote its brand. Each product is associated with conditional purchase probability, price, and quality parameters. The expected revenue from a customer with an attention span $y$ can be expressed as a function of the ranking $\sigma$.\\
The first step of the algorithm involves sorting the products based on quality parameters and attention span. In the second step, the algorithm rearranges the ranking if there is a product that is high priced but not of lower quality and the platform is promoting or placing its products in place of previously higher-ranked products. However, if a product is of lower quality, rearrangement will not be possible as it would be collusive.
\subsection{Choice of Ranking, Revenue Maximization, and Competitive Assortment Planning Model }
This section investigates the strategic behavior of the platform under the framework described above. The point is, why does the platform change the ideal rank and replace a high-ranking product with a lower-ranking product?  There are two logical options. The first goal is to enhance the purchase probability and sales of private brand products, while the second goal is to raise sales of the other listed products.  The inclusion of a low-scoring product may boost the likelihood of purchasing a higher-scoring product that ranks immediately after the low-scoring product (Proposition 2). However, this does not guarantee the anticipated revenue (Proposition 3). As a result, the recommended assortment planning will suit both functions. 
\begin{proposition}
In the presence of information about the customers' fixed attention span, placing the low-ranked product in the middle of two good-ranked items enhances the likelihood of purchasing the next best-ranked products of the included low-ranked product. The criterion is that the two high-scoring goods are arranged in descending order from highest to lowest. 
\end{proposition}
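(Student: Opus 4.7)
The plan is to compare the purchase probability of the product immediately following the low-ranked insertion under two rankings that differ only in the middle slot. Let the three-slot arrangement obtained from the two-stage algorithm be $\sigma=(A,B,F)$, where $A$ and $B$ are the two highest-scoring products in descending order and $F$ is the next-best product. Consider the collusive rearrangement $\sigma'=(A,D,F)$ in which $B$ has been displaced by a low-ranked product $D$ while $A$ and $F$ remain in positions $1$ and $3$. Because the customer's attention span is fixed at $y=3$, the cascade purchase formula stated just before equation (8) applies to both rankings without boundary effects.

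First, I would specialise that cascade formula to $k=3$, giving
\begin{equation*}
P_F(\sigma)=(1-\lambda_A)(1-\lambda_B)\lambda_F,\qquad P_F(\sigma')=(1-\lambda_A)(1-\lambda_D)\lambda_F.
\end{equation*}
The common factors $(1-\lambda_A)$ and $\lambda_F$ then cancel and the inequality $P_F(\sigma')>P_F(\sigma)$ reduces to the single comparison $1-\lambda_D>1-\lambda_B$, i.e.\ $\lambda_D<\lambda_B$.

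Second, I would establish $\lambda_D<\lambda_B$ by combining equation (4) with the monotonicity of the logit demand in equation (5). By hypothesis, $B$ clears Stage 1 of the algorithm, so it has at least $\tilde n$ reviews and a posterior mean $\mu_B=E[\mu\mid\Phi_B]$ close to the observed $\bar q_B$; the product $D$ is low-ranked precisely because it fails the Stage 1 filter, so either $n_D<\tilde n$ (in which case equation (4) pulls $\mu_D$ toward the prior $\mu_0$) or its average rating is strictly lower. Either way, $\mu_D<\mu_B$. Since the two products occupy the same slot under their respective rankings, the search cost $g(2)$ is identical, and the assumption that the two high-scoring goods are arranged in descending order means $F$ is not disturbed. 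Monotonicity of $x\mapsto e^x/(1+e^x)$ in equation (5) then transports $\mu_D<\mu_B$ into $\lambda_D<\lambda_B$, closing the loop with the first step.

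The main obstacle will be step two: ruling out the pathological case in which $D$ has a posterior mean lower than $B$ but a sufficiently lower price $p_D$ so that $\mu_D-p_D>\mu_B-p_B$ and hence $\lambda_D>\lambda_B$. I would handle this by invoking the Stage 1 filter of the algorithm defined via $\tilde n$, in which only quality and review count enter and price plays no role; the proposition's phrase ``low-ranked'' is to be understood in that Stage 1 sense, so price-based compensation cannot occur. Once that is pinned down, the argument is transparent: the effect of replacing $B$ by $D$ in the second slot is exactly to inflate the survival factor $(1-\lambda_{\sigma_2})$ that multiplies the attractiveness of the product in the third slot, which is precisely the mechanism the proposition asserts.
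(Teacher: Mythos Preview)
Your first step is exactly the paper's proof: specialise the cascade purchase formula to the third slot, factor out $(1-\lambda_A)\lambda_F$, and reduce the claim to the single inequality $\lambda_D<\lambda_B$. The paper does precisely this with generic labels $1,2,3,4$, writing the two probabilities as $(\lambda_3-\lambda_1\lambda_3)(1-\lambda_2)$ and $(\lambda_3-\lambda_1\lambda_3)(1-\lambda_4)$ and then invoking $\lambda_2>\lambda_4$; it finishes with the same numerical A--B--F versus A--D--F check you mention.

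Where you diverge is in your second step. The paper does \emph{not} derive $\lambda_D<\lambda_B$ from the Stage~1 filter or from equations~(\ref{4})--(\ref{5}); it simply takes the ordering $\lambda_4<\lambda_3<\lambda_2<\lambda_1$ as the hypothesis of the proposition---``low-ranked'' is read directly as low conditional purchase probability. Your attempt to justify this ordering is therefore extra, and it has a gap: you argue that because Stage~1 sorts on $(\bar q,n)$ and ignores price, ``price-based compensation cannot occur.'' But $\lambda$ in equation~(\ref{5}) depends on $\mu_t-p_t-g(\gamma_k)$ regardless of what Stage~1 filters on, so a product failing Stage~1 on quality can still have a large $\lambda$ if its price is low enough; the Stage~1 criterion says nothing about that. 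The paper sidesteps the issue entirely by assumption. If you wish to keep your derivation you would need an additional premise tying Stage~1 rank to the $\lambda$-ordering, which the paper never states; otherwise simply drop step two and take $\lambda_D<\lambda_B$ as given, and your argument coincides with the paper's.
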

\begin{proof}
	The proof is in the appendix.
\end{proof}
\paragraph{Remarks}  Empirically this can be shown from the Table \ref{tab:title} \& Table \ref{tab:title continue} in Section (2) for the two possible lists viz. A-B-F and A-D-F. The purchase probability that object F will be bought is higher in the list of offer A-D-F  i.e. $ (1-0.95)(1-0.10)0.75 $ i.e. $ 0.03375 $ compared to A-B-F i.e. $ (1-0.95)(1-0.85)0.75 $ i.e. $ 0.005625 $. Therefore, replacing B by D will increase purchase probability of product F.
\begin{proposition}
In the presence of information about the customers' fixed attention span, placing the low-ranked product in the middle of two good-ranked items enhances the likelihood of purchasing the next best-ranked products of the included low-ranked product. The criterion is that the two high-scoring goods are arranged in descending order from highest to lowest.  However, this does not guarantee that the expected revenue would be maximized. 
\end{proposition}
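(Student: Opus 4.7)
The plan is to exploit equation~(8) directly and reduce the proposition to exhibiting one parameter configuration under which the rearrangement identified in Proposition~2 strictly decreases expected revenue; since the statement only claims that purchase-probability gains at the downstream slot \emph{do not guarantee} revenue maximization, a single counterexample suffices, and the $A$--$B$--$F$ versus $A$--$D$--$F$ comparison flagged in Section~2 is of exactly this form.

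First I would fix attention span $y=3$ and consider three products $a,b,c$ with $\lambda_a>\lambda_b>\lambda_c$ (the outcome of the Stage~2 sort) together with a low-scoring product $d$ whose $\lambda_d$ is substantially below $\lambda_b$. Writing $\sigma=(a,b,c)$ for the natural ranking and $\sigma'=(a,d,c)$ for the collusive rearrangement, and setting $\omega_k=1$ for all $k$ (the non-private-label share), equation~(8) gives
\begin{equation*}
\mathcal{R}(\sigma,3)-\mathcal{R}(\sigma',3)=(1-\lambda_a)\bigl[(\lambda_b p_b-\lambda_d p_d)-(\lambda_b-\lambda_d)\lambda_c p_c\bigr].
\end{equation*}
Proposition~2 guarantees that $1-\lambda_d>1-\lambda_b$, so the position-3 contribution strictly rises under $\sigma'$; the sign of the full difference nevertheless depends on whether the position-2 revenue loss $\lambda_b p_b-\lambda_d p_d$ dominates the compounded position-3 gain $(\lambda_b-\lambda_d)\lambda_c p_c$.

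Second, I would instantiate the bracketed expression using the data in Tables~\ref{tab:title}--\ref{tab:title continue} with $(a,b,c,d)=(A,B,F,D)$. A direct calculation yields $\lambda_B p_B-\lambda_D p_D=0.85\cdot 700-0.10\cdot 229=572.1$, whereas $(\lambda_B-\lambda_D)\lambda_F p_F=0.75\cdot 0.75\cdot 299\approx 168.19$. Hence $\mathcal{R}(\sigma,3)>\mathcal{R}(\sigma',3)$, so the ranking $A$--$D$--$F$, which raises the purchase probability of $F$, nevertheless produces strictly \emph{lower} expected revenue than $A$--$B$--$F$; this is precisely the ``does not guarantee'' clause.

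The main obstacle is conceptual rather than computational: one has to make explicit the contrast between the two components of $\Delta\mathcal{R}$---the increase in downstream purchase probability supplied by Proposition~2 versus the decrease in position-2 expected revenue caused by substituting $b$ with $d$---and argue that their magnitudes are genuinely incommensurable in general, so no uniform sign can be established from Proposition~2 alone. I would close by observing that the bracket can also switch sign whenever $p_d$ is close to $p_b$ or $p_c$ is large relative to $p_b$, confirming that the counterexample is not pathological and that revenue-optimality is substantively independent of purchase-probability-optimality at downstream slots.
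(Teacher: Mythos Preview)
Your proposal is correct and follows essentially the same route as the paper: fix $y=3$, compare the expected revenues of the two rankings $(a,b,c)$ and $(a,d,c)$ directly via equation~(8), and observe that although the position-3 term is larger under the collusive ranking (Proposition~2), the position-2 loss $\lambda_b p_b-\lambda_d p_d$ can dominate, as witnessed by the $A$--$B$--$F$ versus $A$--$D$--$F$ data from Tables~\ref{tab:title}--\ref{tab:title continue}. The paper argues term-by-term (position~1 equal, position~3 larger for $\sigma'$, then isolates the requirement $\lambda_4 p_4\omega_4>\lambda_2 p_2\omega_2$ and notes it fails when $\omega_2=\omega_4$), whereas you collapse everything into the single bracketed difference $(\lambda_b p_b-\lambda_d p_d)-(\lambda_b-\lambda_d)\lambda_c p_c$ and evaluate it numerically; this is a slightly cleaner bookkeeping of the same comparison, not a different argument.
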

\begin{proof}
	The proof is in the appendix.
\end{proof}
\paragraph{Remarks} Proposition 3 proves that if the $ \omega_{D} \& \omega_{B}$ are not the same then there is a possibility of collusion because product D is inferior to product B in terms of score and product D is not a sponsored product. Hence, this information could be a source of identifying the collusive ranking behavior. If the customized/distorted(or collusive in some cases) ranking earns higher expected revenue then the condition needs to satisfy according to the Proposition is $ \lambda_{D}p_{D}\omega_{D}>\lambda_{B}p_{B}\omega_{B}$. Where, product B is replaced by the product D. From Table \ref{tab:title} and Table \ref{tab:title continue} it can be checked that whether the condition $ \lambda_{D}p_{D}\omega_{D}>\lambda_{B}p_{B}\omega_{B}$ is fulfilled or not. Here assuming $ \omega_{D}=\omega_{B} $, $ (229*0.10=22.9) < (700*0.85=595) $ or $ \lambda_{D}p_{D}\omega_{D}<\lambda_{B}p_{B}\omega_{B}$. Hence, the list A-B-F earns more expected revenue than the list A-D-F. 
\section{Discussion and Conclusion}
The present article in the first part identifies the existing revenue maximizing assortment planning strategy by the platform. It shows that this assortment algorithm would increase purchase probability of any target product but discourage competition between the vendors and fails to maximize expected revenue. In the second part, the article proposes a new model of assortment planning where revenue can be maximized by maintaining the competition. The following suggestion has been proposed here to avoid the above mentioned ranking issue.\\
The proposed model is a hybrid model that includes both monetary and non-monetary transfer characteristics. The main point of this article is that if the proposed sellers' rank (slots/lists) does not match the actual sellers' rank, money will be transferred from the platform to the sellers who receive a lower rank than the actual one. In the opposite case, if the proposed and actual slots are matched, there will be a money transfer if the platform's expected revenue is dependent on these allocated slots. Because the allocation is based on the actual performance of the sellers, as indicated by the scores. The current paper proposed a model for it and provides a standard mechanism for determining how the expected payoffs from the slots were generated and how they will be distributed between the platform and the sellers. The proposed two-stage assortment optimization algorithm not only discourage collusive ranking but also  increase competition and maximize the expected revenue.
\section*{Mathematical Appendix}
\textbf{Lemma 1}
\textit{To achieve competitive assortment planning the two threshold levels have been proposed in Stage 1 and Stage 2 respectively $ \tilde{n}\& \breve{n} $. These two-stage processes along with the two threshold levels will eliminate the possibility of self-promotion and increase the competition. This means fixing the attention span at $ y $. Suppose the assortment $ S $ with size $ \mid S\mid=y $ is given and the rank of products $ \sigma^{y} $ maximizes $ R(\sigma,y) $ among $ \sigma\in \pi(S) $ subject to fulfilling the two-stage threshold levels viz. $ \tilde{n}\& \breve{n} $. Then we have that the products in $ \sigma^{y} $ are deployed in ascending order with respect to average quality those have minimum $ \tilde{n}$ number of reviews in the first stage and thereafter rearranging them with respect to price those have minimum $ \breve{n} $ number of reviews i.e. $ \sigma^{y}(i)\leq \sigma^{y}(i+1) $ for $ i=1,..., \mid S\mid-1  $.}

\begin{proof}
	The proof has been done in two ways. First, we have to show that there is an incentive on the part of the platform by promoting or placing our products in place of the previous higher-ranked products. Thereafter we will show that the two-stage algorithm will help to correct this issue.\\
	The platform does not know $ \mathbf{q} $ and receives a share $0< \omega_{k} \leq1 $ of every payment that takes place on its website, i.e., the platform realizes a revenue $ \omega_{k} p_{k} $ whenever product $  k $ is sold. This means if the platform promotes its brand let $ j $ then the $ \omega_{j} p_{j} =1.p_{j}$. This means there is an incentive to promote own brand. Let $ \mathbf{\sigma}_{n}=\{\sigma_{1,n,t},...,\sigma_{K,n,t}\} $ be the position assignment observed by customer $ n $, where $ \sigma_{k,n,t}=j $ indicates that product $ k $ is displayed in position $ j $ to customer $ n $ at time $ t $. I use $ \sigma\in \pi(S) $ to refer to a permutation $ \sigma $ of $ S $, where $ \pi (S) $ is the set of all permutations of items in $ S $. For $ i\in \{1,...,\mid S\mid \} $, I use $ \sigma(i)\in S $ to denote the product displayed in the $ i^{th} $ position. This means say for instance, $ \sigma(j)\in \{1,...,j,...,\mid S\mid \} $, then for a product $ j\in S $, the platform displays it in the $ \sigma^{-1}(j)^{th} $ position.\\
	Therefore, each product $ j \in [k] $  is associated with a conditional purchase probability $ \lambda_{j} $, price $ p_{j} $, $ n_{t}, \& \bar{q}_{t} $ i.e. $ j_{ (\lambda_{j},p_{j},n_{t},\bar{q}_{t})} $.  I index the products in two steps. In the first step the shorting will be done by using parameters $ n_{t}, \& \bar{q}_{t} $ and thereafter using $ \lambda_{j}, \& p_{j} $.\\
	
	We know the expected revenue Eq. \ref{8} from a customer with an attention span of $ y $ can be expressed as
	\begin{equation}\label{10}
		\mathcal{R}(\sigma, y)\triangleq \sum_{k=1}^{y\wedge M} \prod_{i=1}^{k-1}(1-\lambda_{\sigma_{i,n\mid t}}).\lambda_{\sigma_{k,n\mid t}}. p_{\sigma_{k,n\mid t}}.\omega_{k}
	\end{equation}
	Let us assume $ S=\{1,...,y\} $ for simplicity of notations. Suppose in the ranking $ \sigma $ there exists $ i\in \{1,,,.y-1\} $ such that $ p_{\sigma(i)}< p_{\sigma(i+1)} \& p_{\sigma(i)}\omega_{\sigma(i)}< p_{\sigma(i+1)} \omega_{\sigma(i+1)}$. Here assume that $ (i+1)^{th} $ products are owned by the platform. These should be supported by the quality condition. These are; $\{ \bar{q}_{\sigma(i)} < \bar{q}_{\sigma(i+1)} :	\bar{q}_{(i,n,t\mid n_{t}\geq \tilde{n})} <	\bar{q}_{(j,n,t\mid n_{t}\geq \tilde{n})} \& 	p_{(i,n,t\mid n_{t}\geq \breve{n})} <	p_{(,jn,t\mid n_{t}\geq \breve{n})}\}$. In this situation, the Two Stage assortment algorithm will achieve the same result if the platform replaces $ (i+1)^{th} $ products in place of $ i$ considering $ (i+1)^{th} $ product is the products offered by the platform and will earn higher revenue i.e $ p_{i+1}.1>\omega_{i}. p_{i}$.\\
	Now consider the situation where, $ p_{\sigma(i)}\omega_{\sigma(i)}< p_{\sigma(i+1)}\omega_{\sigma(i+1)}$   but $\{ \bar{q}_{\sigma(i)} > \bar{q}_{\sigma(i+1)} :	\bar{q}_{(i,n,t\mid n_{t}\geq \tilde{n})} >	\bar{q}_{(j,n,t\mid n_{t}\geq \tilde{n})} \& 	p_{(i,n,t\mid n_{t}\geq \breve{n})} >	p_{(,jn,t\mid n_{t}\geq \breve{n})}\}$. If the platform rearranges the products to maximize the revenue then that will be collusive. These are not considered here because in this case, rearrangements are not possible to increase revenue. And if this is so, then the platform needs to re-adjust the price and quality ranks to make swapping possible to increase revenue.\\
	Consider new ranking $ \acute{\sigma}(i) $ with $  \acute{\sigma} = \sigma(i+1) ,  \acute{\sigma}(i+1) =\sigma(i)$ and $  \acute{\sigma}(k)=\sigma(k)  $ for all other $ k $. \\
	Let $ \Pr_{\sigma(i)}(\sigma) \triangleq  \prod_{s=1}^{i-1} (1-\lambda_{\sigma(s)})$ be the probability that the customer views product $ \sigma(i) $. By definition, it is easy to see $ \Pr_{\sigma(i)}(\sigma)=\Pr_{\acute{\sigma}(k)}(\acute{\sigma}) $ for $ k=1,...,i,i+2,...,y $. Therefore, the expected revenues generated from products in position $ k $ are equal under $ \sigma $ and $ \acute{\sigma} $ for $ k=1,...,i-1,i+2,...,y $ recalling formula 4.\\ As it is assumed that the customer will see the product $ i \& i+1 $ both with equal probability then $ \Phi_{n,k,t}=(n_{t},\bar{q}_{t}) $ will be more for $ i+1 $ compared to $ i $ because $\{ \bar{q}_{\sigma(i)} > \bar{q}_{\sigma(i+1)} :	\bar{q}_{(i,n,t\mid n_{t}\geq \tilde{n})} >	\bar{q}_{(j,n,t\mid n_{t}\geq \tilde{n})} \& 	p_{(i,n,t\mid n_{t}\geq \breve{n})} >	p_{(,jn,t\mid n_{t}\geq \breve{n})}\}$. Therefore rearrangement of product ranking will not give incentive and the $ R(\acute{\sigma},y) < R(\sigma,y)$, will establish. It suffices to compare the revenues generated from the products in position $ i $ and $ i+1 $ under the two rankings. \\
	$ \Pr_{\sigma(i)}(\sigma) \lambda_{\sigma(i)} p_{\sigma(i)}\omega_{\sigma(i)}+ \Pr_{\sigma(i+1)}(\sigma) \lambda_{\sigma(i+1)} p_{\sigma(i+1)}\omega_{\sigma(i+1)} > \\ \Pr_{\acute{\sigma}(i)}(\acute{\sigma})\lambda_{\acute{\sigma}(i)}p_{\acute{\sigma}(i)}\omega_{\acute{\sigma}(i)}+\Pr_{\acute{\sigma}(i+1)}(\acute{\sigma})\lambda_{\acute{\sigma}(i+1)}p_{\acute{\sigma}(i+1)}\omega_{\acute{\sigma}(i+1)} $.\\
	$ \Pr_{\sigma(i)}(\sigma) \lambda_{\sigma(i)} p_{\sigma(i)}\omega_{\sigma(i)}+ \Pr_{\sigma(i+1)}(\sigma) \lambda_{\sigma(i+1)} p_{\sigma(i+1)}\omega_{\sigma(i+1)}\\
	= \Pr_{\sigma(i)}(\sigma)\left[ (p_{\sigma(i)}.\omega_{\sigma(i)}.\lambda_{\sigma(i)})+(1-\lambda_{\sigma(i)})p_{\sigma(i+1)}\omega_{\sigma(i+1)}.\lambda_{\sigma(i+1)}\right]  $ $ \left[  \therefore \Pr_{\sigma(i)}(\sigma).(1-\lambda_{\sigma(i)})=\Pr_{\sigma(i+1)}(\sigma) \right] $\\
	$ = \Pr_{\sigma(i)}(\sigma)\left[ p_{\sigma(i+1)}.\omega_{\sigma(i+1)}.\lambda_{\sigma(i+1)}+p_{\sigma(i)}.\omega_{\sigma(i)}.\lambda_{\sigma(i)}-p_{\sigma(i+1)}.\omega_{\sigma(i+1)}.\lambda_{\sigma(i)}.\lambda_{\sigma(i+1)}\right] $\\
	$ = \Pr_{\sigma(i)}(\sigma)[ p_{\sigma(i+1)}.\omega_{\sigma(i+1)}.\lambda_{\sigma(i+1)}+ p_{\sigma(i)}.\omega_{\sigma(i)}.\lambda_{\sigma(i)}-p_{\sigma(i)}.\omega_{\sigma(i)}.\lambda_{\sigma(i)}.\lambda_{\sigma(i+1)}+p_{\sigma(i)}.\omega_{\sigma(i)}.\lambda_{\sigma(i)}.\lambda_{\sigma(i+1)}-\\ p_{\sigma(i+1)}.\omega_{\sigma(i+1)}.\lambda_{\sigma(i)}.\lambda_{\sigma(i+1)}] $\\
	$ = \Pr_{\sigma(i)}(\sigma)[p_{\sigma(i+1)}.\omega_{\sigma(i+1)}.\lambda_{\sigma(i+1)}+p_{\sigma(i)}(1-\lambda_{\sigma(i+1)})\omega_{\sigma(i)}.\lambda_{\sigma(i)}+(p_{\sigma(i)}.\omega_{\sigma(i)}-p_{\sigma(i+1)}.\omega_{\sigma(i+1)}).\lambda_{\sigma(i)}.\lambda_{\sigma(i+1)}] $ \\
	$ \geq \Pr_{\sigma(i)}(\sigma) \left[p_{\sigma(i+1)}.\omega_{\sigma(i+1)}.\lambda_{\sigma(i+1)}+p_{\sigma(i).\omega_{\sigma(i)}.\lambda_{\sigma(i)}(1-\lambda_{\sigma(i+1)})}\right] $ + \\$ \Pr_{\sigma(i)}(\sigma)[ (p_{\sigma(i)}.\omega_{\sigma(i)}-p_{\sigma(i+1)}.\omega_{\sigma(i+1)}).\lambda_{\sigma(i)}.\lambda_{\sigma(i+1)} ]$ [$ \therefore p_{i+1}.1>\omega_{i}. p_{i}$]\\
	$ > \Pr_{\acute{\sigma}(i)}(\acute{\sigma}).p_{\acute{\sigma}(i)}.\omega_{\acute{\sigma}(i)}.\lambda_{\acute{\sigma}(i)}+ \Pr_{\acute{\sigma}(i+1)}(\acute{\sigma}).p_{\acute{\sigma}(i+1)}.\omega_{\acute{\sigma}(i+1)}.\lambda_{\acute{\sigma}(i+1)} $.\\
	This is because;\\
	$ \left[p_{\sigma(i)}.\omega_{\sigma(i)}-p_{\sigma(i+1).\omega_{\sigma(i+1)}}\right ]\leq0$.\\
	Or, $ \dfrac{p_{\sigma(i)}}{p_{\sigma(i+1)}} \leq \dfrac{\omega_{\sigma(i+1)}}{\omega_{\sigma(i)}}$. \\
	Therefore, $\sigma$ cannot be optimal and we have completed the proof for $ 0< \lambda<1 $.
\end{proof}
\textbf{Proposition 1}\textit{	A competitive assortment planning method handles two-sided challenges. The first is a problem with traders/vendors and the platform's motivation, and the second is a customer concern. If each non-sponsored vendor has equal market access, and if each consumer is exposed to all viable low-cost and high-quality alternatives during the search process, competition in assortment planning will be achieved. So, in the first stage, vendors will be ranked based on their quality if they have at least $ \tilde{n} $ ratings. This will achieve the equal access criterion, and customers will be able to view the products if they have a better rank and at least $ \breve{n} $ reviews.}

\begin{proof}
	Proof of Proposition:\\
	We will prove the proposition in two parts, first addressing the problem with traders/vendors and the platform's motivation, and then addressing the customer concern.\\
	Part 1: Problem with Traders/Vendors and Platform's Motivation\\
	Assumption: Each non-sponsored vendor has equal market access.\\
	Given this assumption, we need to ensure that the competition among vendors is fair and that the platform's motivation is not compromised. To achieve this, we propose ranking vendors based on their quality if they have at least $\tilde{n}$ ratings. This ensures that vendors with a track record of good quality are given priority, and the platform's motivation to provide high-quality products is upheld.\\
	Part 2: Customer Concern\\
	Assumption: Each consumer is exposed to all viable high-quality alternatives during the search process within the attention span $ y $.\\
	Let $ \Pr_{\sigma(i)}(\sigma) \triangleq  \prod_{s=1}^{i-1} (1-\lambda_{\sigma(s)})$ be the probability that the customer views product $ \sigma(i) $. By definition, it is easy to see $ \Pr_{\sigma(i)}(\sigma)=\Pr_{\acute{\sigma}(j+1)}(\acute{\sigma}) $ for $ k=1,...,i,i+2,...,y $ only if Two conditions are fulfilled . These are (i)$ 	\bar{q}_{(j,n,t\mid n_{t}\geq \tilde{n})}\geq 	\bar{q}_{(j+1,n,t\mid n_{t}\geq \tilde{n})}  \mbox{if}; n_{j,t}=n_{j+1,t} $ in the Stage 1 and (ii) $ 	\lambda_{(j,n,t\mid n_{t}\geq \tilde{n})}\geq 	\lambda_{(j+1,n,t\mid n_{t}\geq \breve{n})}  \mbox{if}; p_{j,t}=p_{j+1,t} $ in Stage 2. Otherwise $ \Pr_{\sigma(i)}(\sigma)\neq\Pr_{\acute{\sigma}(j+1)}(\acute{\sigma}) $ for $ k=1,...,i,i+2,...,y $. 
	Given this assumption, we need to ensure that customers have access to a diverse set of products and can choose from the best options available. To achieve this, we propose that customers can view the products of vendors that have a better rank and at least $\breve{n}$ reviews. This ensures that customers are exposed to high-quality products and are not limited to a narrow selection of options.\\
	Overall, by ensuring fair competition among vendors and providing customers with a diverse set of options, a competitive assortment planning method can effectively address the two-sided challenges of the platform.
\end{proof}

\textbf{Proposition 2}\textit{	In the presence of information about the customers' fixed attention span, placing the low-ranked product in the middle of two good-ranked items enhances the likelihood of purchasing the next best-ranked products of the included low-ranked product. The criterion is that the two high-scoring goods are arranged in descending order from highest to lowest.}
\begin{proof}
	Let there be three objects arranged in descending order from highest to lowest with respect to the demand function, denoted by $ \lambda (p_{t},\Phi_{t}, \gamma)$, where, $ 0 \leq \lambda (p_{t},\Phi_{t}, \gamma) \leq 1$. The products are labeled as $ [k]\triangleq \{1,2,3,4\} $. Let $ \lambda_{1} >\lambda_{2}> \lambda_{3} $ and for the product $ 4 $ it is true that $ \lambda_{4}< \lambda_{3} <\lambda_{2}< \lambda_{1}$. Here, $ \lambda  $ is the probability of purchase following the familiar logit model. It is known that Given that position $ k $ is within a customer’s attention span, her purchase probability of product $ \sigma_{k,n\mid t,}$ is
	\begin{equation*}
		\prod_{i=1}^{k-1}(1-\lambda_{\sigma_{i,n\mid t}}).\lambda_{\sigma_{k,n\mid t}}
	\end{equation*}
	So the purchase probability that the product $ 3 $ will be bought with a customer attention span $ y= three $, where the products are arranged like $ 1-2-3 $ is 
	\begin{equation}\label{key}
		(\lambda_{3}-\lambda_{1}\lambda_{3})(1-\lambda_{2}) 
	\end{equation}\\
	Now consider the platform replaces the product $ 2 $ by the product $ 4 $ and creates a new list i.e. $ 1-4-3 $ then the purchase probability of the product $ 3 $ would be 
	\begin{equation}\label{key}
		(\lambda_{3}-\lambda_{1}\lambda_{3})(1-\lambda_{4})
	\end{equation}
	Therefore, the probability that the product $ 3 $  will be bought in the second list i.e. $ 1-4-3 $ will be higher than the first list i.e. $ 1-2-3 $ because, \\
	$  (\lambda_{3}-\lambda_{1}\lambda_{3})(1-\lambda_{2})  < 	(\lambda_{3}-\lambda_{1}\lambda_{3})(1-\lambda_{4})$ is true \\
	for, $ \lambda_{2} > \lambda_{4} $. It would be a contradiction if it were false.\\
	Hence, the condition that needs to be fulfilled to have a successful condition where $ 3^{rd} $ product will be sold with the highest probability is to replace $ 2^{nd}$ ranked product by any other product with lower demand than the $ 2^{nd} $ one.\\
	Empirically this can be shown from the Table in Section (2) for the two possible lists viz. A-B-F and A-D-F. The purchase probability that object F will be bought is higher in the list of offer A-D-F  i.e. $ (1-0.95)(1-0.10)0.75 $ i.e. $ 0.03375 $ compared to A-B-F i.e. $ (1-0.95)(1-0.85)0.75 $ i.e. $ 0.005625 $.
\end{proof}

\textbf{Proposition 3}\textit{	In the presence of information about the customers' fixed attention span, placing the low-ranked product in the middle of two good-ranked items enhances the likelihood of purchasing the next best-ranked products of the included low-ranked product. The criterion is that the two high-scoring goods are arranged in descending order from highest to lowest.  However, this does not guarantee that the expected revenue would be maximized. }
\begin{proof}
	Let there be two lists (with reference from Proposition 2) $ \sigma^{1}=1-2-3 \& \sigma^{2}= 1-4-3 $ and a fixed attention span $ y=3 $. Given these parameters, the expected revenue functions are given below:\\
	The expected revenue from a customer with an attention span of $ y $ can be expressed as
	\begin{equation}\label{10}
		\mathcal{R}(\sigma, y)\triangleq \sum_{k=1}^{y\wedge M} \prod_{i=1}^{k-1}(1-\lambda_{\sigma_{i,n\mid t}}).\lambda_{\sigma_{k,n\mid t}}. p_{\sigma_{k,n\mid t}}.\omega_{k}
	\end{equation}
	For $ \sigma^{1} $ the expected revenue for all is possible ranks and fixed attention span is 
	\begin{equation}\label{key}
		\mathcal{R}(\sigma^{1},y)=	(1-\lambda_{1})(1-\lambda_{2})\lambda_{3}p_{3}\omega_{3}+(1-\lambda_{1})\lambda_{2}p_{2}\omega_{2}+\lambda_{1}p_{1}\omega_{1}.
	\end{equation}
	For $ \sigma^{2} $ the expected revenue for all is possible ranks and fixed attention span is 
	\begin{equation}\label{key}
		\mathcal{R}(\sigma^{2},y)=	(1-\lambda_{1})(1-\lambda_{4})\lambda_{3}p_{3}\omega_{3}+(1-\lambda_{1})\lambda_{4}p_{4}\omega_{4}+\lambda_{1}p_{1}\omega_{1}.
	\end{equation}
	Now consider a contradiction, letting $  \mathcal{R}(\sigma^{2},y)>	\mathcal{R}(\sigma^{1},y)$. This means the first part i.e. $ (1-\lambda_{1})(1-\lambda_{4})\lambda_{3}p_{3}\omega_{3}> (1-\lambda_{1})(1-\lambda_{2})\lambda_{3}p_{3}\omega_{3}$ because $ \lambda_{2}>\lambda_{4} $. The third part is also same i.e.  $ \lambda_{1}p_{1}\omega_{1} $ in both of the cases. And according to the condition $  	\mathcal{R}(\sigma^{2},y)>	\mathcal{R}(\sigma^{1},y)$ the second terms state that $ (1-\lambda_{1})\lambda_{4}p_{4}\omega_{4}> (1-\lambda_{1})\lambda_{2}p_{2}\omega_{2}$. This means $  \lambda_{4}p_{4}\omega_{4}>\lambda_{2}p_{2}\omega_{2}$. This is not true all the time. Because $ \lambda_{4}<\lambda_{2} $. Hence if demand is lower so the price would also be lower i.e. $ p_{4}<p_{2} $. Therefore, for the equal condition i.e. $  \omega_{2}=\omega_{4}$ the this is not true that $  \mathcal{R}(\sigma^{2},y)>	\mathcal{R}(\sigma^{1},y)$. The possibility that $  \omega_{4}=\omega_{2}$ is also lower because higher demand followed by higher price will push the platform to maintain stable $ \omega_{2} $ until product $ 4 $ is owned by the platform and then it would be a colluding rank. This collusion is against antitrust practices.
\end{proof}
	\bibliographystyle{apalike}
	\bibliography{myreferences}
\end{document}